\documentclass[prx,aps,amssymb,twocolumn,superscriptaddress, notitlepage, longbibliography]{revtex4-2}
\usepackage{bm,amssymb,dsfont,mathrsfs,amstext,latexsym,physics,relsize}
\usepackage{amsthm}
\usepackage{amsmath}
\usepackage{graphicx}
\usepackage[usenames,dvipsnames]{xcolor}
\usepackage[colorlinks=true,citecolor=blue,urlcolor=blue,linkcolor=blue]{hyperref}
\usepackage{orcidlink}
\usepackage{braket}
\usepackage{txfonts} 
\usepackage[usenames,dvipsnames]{xcolor}
\usepackage{enumitem}
\usepackage{multirow}

\newtheorem{lem}{Lemma}
\renewcommand{\textcolor}[2]{#2}
\begin{document}
\title{The Time-Dependent Jastrow
Ansatz: Exact Quantum Dynamics, Shortcuts to Adiabaticity, and Quantum Quenches
in Strongly-Correlated Many-Body Systems}
\author{Jing Yang\orcidlink{0000-0002-3588-0832}} 
\email{jing.yang@su.se}
\altaffiliation{Present Address: Nordita, KTH Royal Institute of Technology and Stockholm University, Hannes Alfv\'ens v\"ag 12, 106 91 Stockholm, Sweden.}
\address{Department of Physics and Materials Science, University of Luxembourg,
L-1511 Luxembourg, Luxembourg}

\author{Adolfo~del Campo\orcidlink{0000-0003-2219-2851}}  
\email{adolfo.delcampo@uni.lu}
\affiliation{Department  of  Physics  and  Materials  Science,  University  of  Luxembourg,  L-1511  Luxembourg,  Luxembourg}
\affiliation{Donostia International Physics Center,  E-20018 San Sebasti\'an, Spain}

\begin{abstract}
\textcolor{blue}{
The description of strongly correlated quantum many-body systems far from equilibrium presents a fundamental challenge due to the vast amount of information it requires. We introduce a generalization of the Jastrow ansatz for time-dependent wavefunctions that offers an efficient and exact description of the time evolution of various strongly correlated systems. Previously known exact solutions are characterized by scale invariance, enforcing self-similar evolution of local correlations, such as the spatial density. However, we demonstrate that a complex-valued time-dependent Jastrow ansatz (TDJA) is not restricted to scale invariance and can describe a broader class of dynamical processes lacking this symmetry. The associated time evolution is equivalent to the implementation of a shortcut to adiabaticity (STA) via counterdiabatic driving along a continuous manifold of quantum states described by a real-valued TDJA, providing a framework for engineering exact STA in strongly correlated many-body quantum systems. We illustrate our findings in systems with inverse-square interactions, such as the Calogero-Sutherland and hyperbolic models, supplemented with pairwise logarithmic interactions, as well as in the long-range Lieb-Liniger model, where bosons experience both contact and Coulomb interactions in one dimension. Our results enable the study of quench dynamics in all these models and serve as a benchmark for numerical and quantum simulations of nonequilibrium strongly correlated systems with continuous variables.}
\end{abstract}

\maketitle

\section{Introduction}

\textcolor{blue}{The nonequilibrium dynamics of isolated quantum many-body systems give rise to rich and intriguing physics, encompassing phenomena such as quantum chaos, many-body localization, and thermalization \citep{DAlessio16, abanin2019colloquium}. Understanding these phenomena becomes particularly challenging in the presence of strong interactions, where perturbative approaches fail. In such cases, numerical and analytical techniques are essential, with notable examples including exact diagonalization, the density-matrix renormalization group \citep{schollwock2011thedensitymatrix}, and quantum Monte Carlo algorithms.}

\textcolor{blue}{In one spatial dimension \citep{Cazalilla11, GuanBatchelor13}, additional methods are available for studying strongly correlated systems. For instance, the Bethe ansatz and the quantum inverse scattering method allow for the exact solution of certain integrable many-body quantum systems \citep{KBI97, Gaudin14, Sutherland04}. However, even when the exact eigenstates are known, evaluating correlation functions remains a significant challenge, both in thermal equilibrium and, even more so, in nonequilibrium settings \citep{Rylands20, GuanHe22}.
Furthermore, numerical techniques such as the density-matrix renormalization group and quantum Monte Carlo algorithms face accuracy limitations when simulating long-time unitary dynamics. These difficulties become particularly pronounced in systems with continuous variables.}

\textcolor{blue}{Recent advancements in quantum simulation have made it possible to experimentally realize strongly interacting models, fulfilling Feynman’s vision of quantum emulation \citep{Feynman82}. A paradigmatic platform for analog quantum simulation is provided by ultracold gases \citep{bloch2008manybody}. For example, Olshanii demonstrated that ultracold bosons interacting via $s$-wave scattering are effectively described by the celebrated Lieb-Liniger (LL) model \citep{LL63, L63} when confined in a tight waveguide \citep{Olshanii98}.}

\textcolor{blue}{
Moreover, the implementation of these models in ultracold atomic systems enables precise control over interactions, allowing them to be tuned from the strongly attractive to the strongly repulsive regimes via Feshbach resonance. This tunability has made it possible to explore the quench dynamics of many-body systems in the laboratory, such as those induced by an interaction quench or a change in confinement \citep{Cazalilla11, Langen15, Campbell15, GuanHe22}—a regime where strong correlations pose significant theoretical challenges \citep{Girardeau03, Buljan08, Jukic08, Iyer12, Zill15, Carleo17, Ruggiero20}.}

\textcolor{blue}{
Further progress in quantum simulation has been driven by the digital approach, where the dynamics of interest are approximated using a quantum circuit. This paradigm is under extensive investigation, with applications spanning condensed matter physics \citep{Raeisi12, Smith19}, quantum field theory \citep{Jordan12}, and quantum chemistry \citep{Cao19}. Additionally, hybrid analog-digital approaches offer an alternative framework for exploring nonequilibrium quantum phenomena \citep{Lamata18}.
However, progress in quantum simulation is constrained by the scarcity of analytical and exact results for the nonequilibrium dynamics of strongly correlated many-body systems \citep{Mitra18, GuanHe22}. These results are not only valuable from a fundamental perspective but also serve as crucial benchmarks for quantum simulation algorithms and provide insights into complex experimental observations.}

\textcolor{blue}{
Many interacting many-body systems of interest exhibit strong correlations in their ground state \citep{wen2004quantum, mahan2000manyparticle, girvin2019moderncondensed}. One of the simplest wavefunctions that captures these ground-state correlations is the (Bijl-Dingle-) Jastrow ansatz \citep{Bijl40, Dingle49, Jastrow55}, which assumes that the ground-state wavefunction can be expressed as a product of pair functions that depend only on the interparticle spacing. While generally considered an approximate wavefunction for many-body quantum systems, the Jastrow ansatz is particularly useful in describing perturbative expansions at low densities \citep{gaudin1971jastrow} and is widely employed as a trial wavefunction in quantum Monte Carlo techniques.}

\textcolor{blue}{Moreover, the Jastrow ansatz provides an exact description of the ground state for certain interacting many-body systems. To identify such models, one can assume a Jastrow-form wavefunction and determine the corresponding parent Hamiltonian by solving the time-independent Schr\"odinger equation. This approach was pioneered by Calogero \citep{Calogero71} and Sutherland \citep{Sutherland71, Sutherland04}, leading to the discovery of the well-known family of one-dimensional integrable Calogero-Sutherland (CS) models. These models include particles interacting via inverse-square potentials in unbounded space, as well as inverse-square sinusoidal interactions under periodic boundary conditions \citep{Sutherland04, kuramoto2009dynamics}.}

\textcolor{blue}{
Similarly, the attractive Lieb-Liniger (LL) gas has long been known to exhibit bright quantum soliton states with a Jastrow-form wavefunction \citep{McGuire64, BeauPittman20}. By now, the complete family of models describing interacting identical particles with a Jastrow ground state has been established in any spatial dimension \citep{CalogeroMarchioro75, delcampo20, BeauDC21, yang2022one}. These parent Hamiltonians generally include both two-body and three-body interactions. While the latter are absent in the celebrated CS and LL models, renormalization group calculations have shown that three-body interactions are irrelevant for long-wavelength, low-temperature physics \citep{Kane91}, further reinforcing the practical utility of the Jastrow ansatz in physical applications.}

\textcolor{blue}{In contrast to its widespread use in equilibrium settings, the application of the Jastrow ansatz to nonequilibrium scenarios remains relatively unexplored. Notably, the dynamics of the rational Calogero-Sutherland (CS) model—describing one-dimensional bosons with inverse-square interactions confined in a harmonic trap—are exactly captured by a time-dependent Jastrow ansatz at all times \citep{Sutherland98, DaeYupSong01, DaeYupSong02}. This model not only serves as a valuable testbed for studying nonequilibrium phenomena but also includes hard-core bosons in the Tonks-Girardeau regime as a limiting case \citep{Girardeau60, Girardeau01, minguzzi2005exactcoherent, GangardtPustilnik08}, making it highly relevant to ultracold atom experiments \citep{Kinoshita04, Paredes04, Wilson20}. 
As a result, this model has inspired a wide range of studies on topics such as nonexponential decay, quantum speed limits, Loschmidt echoes, and orthogonality catastrophe \citep{delcampo16, Fogarty20, XuLi20, LvZhang20, delcampo21}. It has also played a key role in advancing research in finite-time quantum thermodynamics \citep{Jaramillo16, Beau16, GarciaMarch16, BeauDelcampo20, Atas20}, quantum quenches \citep{Rajabpour14, Campbell15}, and quantum control \citep{delcampo11, delCampo12, Dupays21, Dupays22}, among other areas.}

\textcolor{blue}{
While these applications help to illustrate the value of exact solutions
in quantum dynamics, they are all characterized by scale-invariance.
This dynamical symmetry is highly restrictive and results in self-similar 
time evolution, meaning that the spatial probability densities (the absolute 
square of the wavefunction in the coordinate representation) at any two 
different times are simply related by a rescaling of the coordinate variables. 
For instance, in one spatial dimension, the time dependence of a many-body 
quantum state $\Psi(t)$ satisfies  
\begin{equation}
|\Psi(x_{1},\dots,x_{N},t)|^{2}=\frac{|\Psi(x_{1}/b(t),\dots,x_{N}/b(t),t=0)|^{2}}{b(t)^{N}},\label{densityinv}
\end{equation}
where $b(t)>0$ is the scaling factor. In this case, the complexity 
of many-body time evolution reduces significantly to determining the 
scaling factor, which obeys an ordinary differential equation known 
as the Ermakov equation. The latter was first introduced in the context 
of the time-dependent harmonic oscillator~\citep{LewisRiesenfeld69,Chen10}, 
and together with its generalizations, it plays a fundamental role 
in the study and control of Bose-Einstein condensates~\citep{CastinDum96,Kagan96,gritsev2010scaling} 
and ultracold Fermi gases~\citep{Castin04,WernerCastin06,Deng16,Deng18,Deng18Sci,Diao18}.}  

\textcolor{blue}{
Beyond the realm of scale-invariant dynamics, results are scarce, and 
time-dependent Jastrow ans\"atze have only recently been explored in numerical 
methods. Notably, integrating the Jastrow ansatz with quantum Monte Carlo 
algorithms has been applied to the study of the quench dynamics of the 
Lieb-Liniger (LL) model~\citep{Carleo17}. } 

\textcolor{blue}{A natural question arises: can the construction of the parent Hamiltonian 
used for the stationary Jastrow ansatz be extended to the time-dependent 
setting for arbitrary processes? However, a direct extension of this 
approach generally leads to a parent Hamiltonian that is not necessarily 
Hermitian. The underlying reason for this non-Hermiticity is that the 
dynamics implicitly assumed by the time-dependent trial wavefunction may 
break unitarity. It is worth noting that recent progress in finding 
parent Hamiltonians~\citep{Rattacaso21} for time-dependent quantum 
states has primarily focused on discrete spin systems, which are not 
directly applicable to the continuous-variable many-body systems considered here. } 

\textcolor{blue}{In this work, we extend the program of constructing parent Hamiltonians 
for Jastrow wavefunctions to the time-dependent case in one-dimensional 
quantum many-body systems in the continuum, i.e., with continuous variables. 
We derive consistency conditions for the one-body and two-body pair functions 
that define the Jastrow ansatz and apply them to systems embedded in a 
harmonic trap. These consistency conditions lead to the parent Hamiltonian 
of the complex-valued time-dependent Jastrow ansatz (TDJA), where the 
amplitude of the two-body trial wavefunction retains a functional form 
similar to that of the ground state of the Calogero-Sutherland (CS), 
Hyperbolic, and attractive Lieb-Liniger models.}  

\textcolor{blue}{Interestingly, while the Ermakov equation—ubiquitous in the study of 
scale-invariant dynamics with specific interactions~\citep{Castin04,gritsev2010scaling,delcampo16}—
emerges in our approach, the dynamics described by the complex-valued TDJA 
are not necessarily scale-invariant. Applying our framework to the aforementioned 
trial wavefunctions leads to generalizations of the CS, Hyperbolic, and 
LL models with additional interactions. 
In these systems, we establish the relations between the initial 
density distribution and its time evolution, as well as the long-time momentum 
distribution.}

\textcolor{blue}{As a further application of our results, we demonstrate that the parent Hamiltonian 
of the complex-valued TDJA can be utilized for the engineering of Shortcuts 
to Adiabaticity (STA)~\citep{Torrontegui13,guery-odelin2019shortcuts}, which 
enable the fast preparation of a target state from a given initial eigenstate 
without requiring the long timescales necessary for traditional adiabatic evolution.  
More specifically, we show that the parent Hamiltonian of the complex-valued TDJA 
can be interpreted as an implementation of the adiabatic evolution of the real-valued  TDJA. Furthermore, we illustrate how our findings can be applied to study the 
exact dynamics following a quantum quench of the interparticle interactions.  
In particular, we demonstrate this in the context of the long-range 
Calogero-Sutherland models with logarithmic interactions and the long-range 
Lieb-Liniger model with Coulomb interactions.}

\section{The Time-dependent Jastrow ansatz (TDJA)}

The original Jastrow ansatz~\citep{Bijl40,Dingle49,Jastrow55}, constructed
in terms of products of a pair function and a one-body function, is
time-independent and real-valued. Its use turned out to be a fruitful
approach and led to the discovery of many integrable models, including
the family of Calogero-Sutherland models \citep{Calogero71,Sutherland71,CalogeroMarchioro75,delcampo20,BeauDC21}.
More recently, other models with ground state wave function supporting
a real-valued time-independent Jastrow ansatz (TIJA) have been discovered~\citep{yang2022one,delcampo20,BeauPittman20}.

In this section, we focus on the generalization describing the time-dependent
case and introduce the TDJA, 
\begin{equation}
\Psi(\bm{x},t)=\frac{1}{\exp[\mathcal{N}(t)]}\prod_{i<j}f_{ij}(t)\prod_{k}g_{k}(t),\label{eq:TD-Jastrow}
\end{equation}
where $\bm{x}=(x_{1},\,x_{2},\,\cdots,\,x_{N})$, $\exp[\mathcal{N}(t)]$
is the normalization of the Jastrow wave function, where $\mathcal{N}(t)$
is a real-valued function that only depends on time. 
As a shorthand, we  also define $d^Nx=\prod_{m=1}^{N}dx_{m}$ for short. In addition, $f_{ij}(t)\equiv f(x_{ij},\,t)$
and $g_{k}(t)\equiv g(x_{k},\,t)$ are functions of the particles'
coordinates and time. Throughout this work, we focus on bosonic wave
functions exclusively so that $f(x,\,t)=f(-x,\,t)$. The TDJA~(\ref{eq:TD-Jastrow})
describes the exact solution to the time-dependent Schr\"odinger equation
\begin{equation}
\text{i}\hbar\dot{\Psi}(\bm{x},\,t)=\hat{\mathscr{H}}(t)\Psi(\bm{x},\,t),\label{TISEeq}
\end{equation}
when the dynamics is generated by the many-body Hamiltonian 
\begin{eqnarray}
\hat{\mathscr{H}}(t)&=&\sum_{i}\frac{\hat{p}_{i}^{2}}{2m}+\frac{\hbar^{2}}{2m}\sum_{i}v_{1}^{(i)}+
\frac{\hbar^{2}}{m}
\sum_{i<j}v_{2}^{(ij)}\nonumber\\
& & +
\frac{\hbar^{2}}{m}
\sum_{\substack{i<j<k}
}v_{3}^{(ijk)}-\text{i}\hbar\dot{\mathcal{N}}(t),\label{eq:scrH-def}
\end{eqnarray}
where 
\begin{align}
v_{1}^{(i)} & \equiv\frac{g_{i}^{\prime\prime}}{g_{i}}+\frac{2\text{i}m}{\hbar}\frac{\dot{g}_{i}}{g_{i}},\label{eq:v1-def}\\
v_{2}^{(ij)} & \equiv\frac{f_{ij}''}{f_{ij}}+\left(\frac{g_{i}^{\prime}}{g_{i}}-\frac{g_{j}^{\prime}}{g_{j}}\right)\frac{f_{ij}^{\prime}}{f_{ij}}+\frac{\text{i}m}{\hbar}\frac{\dot{f}_{ij}}{f_{ij}},\label{eq:v2-def}\\
v_{3}^{(ijk)} & \equiv-\left(\frac{f_{ij}'f_{jk}^{\prime}}{f_{ij}f_{jk}}+\frac{f_{ij}'f_{ki}^{\prime}}{f_{ij}f_{ki}}+\frac{f_{ki}'f_{jk}^{\prime}}{f_{ki}f_{jk}}\right),\label{eq:v3-def}
\end{align}
are the normalized one-body, two-body, and three-body potentials bearing
the dimension of inverse length squared. Throughout the work, the
overdot denotes the time derivative, while the derivative with respect to a spatial coordinate of a function $f$ is denoted by $f'$.

We note that the many-body time-dependent potential in Eq.~(\ref{eq:scrH-def})
is not Hermitian in general. Once the Hermiticity is guaranteed, the
unitarity of the dynamics dictates that the norm of $\Psi$ must be
constant in time. {This, in turn, determines the time-dependence of a
the  multi-dimensional integral over the spatial coordinates up
to a constant, i.e.,} 
\begin{equation}
\exp[2\mathcal{N}(t)]\propto\int d^Nx\prod_{i<j}|f_{ij}(t)|^{2}\prod_{k}|g_{k}(t)|^{2}.\label{eq:normalization}
\end{equation}
However, we note that imposing a constant norm of the trial wave function
alone is not sufficient to yield a Hermitian Hamiltonian. 
Additional constraints are required on
on the functional forms of $f_{ij}(t)$ and $g_{k}(t)$, as discussed in Sec.~\ref{sec:Imposing-Hermicitiy}.

The family of Hamiltonians (\ref{eq:scrH-def}) provides a generalization
to driven systems of the seminal result known in the stationary case,
i.e., the family of parent Hamiltonians with stationary real-valued
Jastrow ground state \citep{CalogeroMarchioro75,Sutherland04,delcampo20,BeauDC21}.
Naturally, this family and the corresponding real-valued TIJA as the
ground state can be recovered~ by choosing $f_{ij}$ and $g_{k}$
time-independent and real-valued in Eqs. (\ref{eq:scrH-def})~and
(\ref{eq:TD-Jastrow}), respectively. 

We consider two 
options in the time-dependent case.
The most general option is to make $f_{ij}(t)$ and $g_{k}(t)$ both
time-dependent and complex-valued. Since any complex number can be
represented in the polar form by a non-negative amplitude multiplied
by a phase, without loss of generality, we take 
\begin{equation}
f_{ij}(t)=e^{\Gamma_{ij}(t)+\text{i}\theta_{ij}(t)},\:g_{k}(t)=e^{\Lambda_{k}(t)+\text{i}\phi_{k}(t)},\label{eq:complex-trial-WF}
\end{equation}
where we use the compact notation $\Gamma_{ij}(t)\equiv\Gamma(x_{ij},\,t)$,
$\Lambda_{k}(t)\equiv\Lambda(x_{k},t)$, etc. Here, $\theta_{ij}(t)$
is a two-body phase, and $\phi_{k}(t)$ is a one-body
phase. Note that the zero-body phase $\tau(t)$ can be
incorporated into the phase factor $e^{\text{i}\phi_{k}(t)}$ by redefining
$\phi_{k}(t)\to\phi_{k}(t)=\phi_{k}(t)+\tau(t)/N$ and hence is omitted.

We propose to reverse engineer the parent Hamiltonian using the time-dependent
 Schr\"odinger  equation (\ref{TISEeq}) with the Jastrow ansatz as an
exact solution. Specifically, we shall solve for $\mathscr{H}(t)$
and demonstrate that the solution admits a variety of applications, including
the use of STA and quench dynamics for several one-dimensional many-body
strongly correlated quantum models.

Before doing so, we note that an alternative ansatz can be constructed
by considering $f_{ij}$ and $g_{k}$ to be time-dependent, while
keeping them real-valued, i.e., 
\begin{equation}
\Phi(\bm{x},t)=\frac{1}{\exp[\mathcal{N}(t)]}\prod_{i<j}e^{\Gamma_{ij}(t)}\prod_{k}e^{\Lambda_{k}(t)},\label{eq:TDIR-Jastrow}
\end{equation}
which we shall call real-valued TDJA. The parent Hamiltonian of the
real-valued TDJA according to the time-dependent Schr\"odinger equation
\begin{equation}
\text{i}\hbar\dot{\Phi}(\bm{x},\,t)=\hat{\mathscr{H}}^{\prime}(t)\Phi(\bm{x},\,t)
\end{equation}
can also be found analogously. Clearly, 
\begin{equation}
\Psi(\bm{x},t)=\mathscr{U}_{P}(\bm{x},t)\Phi(\bm{x},t),
\end{equation}
where the many-particle phase unitary operator is defined as 
\begin{equation}
\mathscr{U}_{P}(\bm{x},t)\equiv\prod_{i<j}e^{\text{i}\theta_{ij}(t)}\prod_{k}e^{\text{i}\phi_{k}(t)}.
\end{equation}
The corresponding Hamiltonians $\hat{\mathscr{H}}^{\prime}(t)$ and
$\mathscr{\hat{H}}(t)$ are unitarily equivalent in the sense that
\begin{equation}
\mathscr{\hat{H}}^{\prime}(t)=\mathscr{U}_{P}^{\dagger}(\bm{x},t)\mathscr{\hat{H}}(t)\mathscr{U}_{P}(\bm{x},t)-\text{i}\hbar\mathscr{U}_{P}^{\dagger}(\bm{x},t)\dot{\mathscr{U}_{P}}(\bm{x},t).\label{eq:U-connection}
\end{equation}

In what follows, we  find $\hat{\mathscr{H}}(t)$ first and subsequently determine
$\hat{\mathscr{H}}^{\prime}(t)$ through Eq.~(\ref{eq:U-connection}).
 We will see that the many-body potential in $\hat{\mathscr{H}}^{\prime}(t)$
is non-local, i.e., it involves a term linear in the particles' momenta.
Nevertheless, one can still define the prime parent
Hamiltonian according to the instantaneous time-independent Schr\"odinger
equation 
\begin{equation}
\hat{\mathscr{H}}_{0}^{\prime}(t)\Phi(\bm{x},t)=0,
\end{equation}
i.e., $\Phi(\bm{x},t)$ is the instantaneous eigenstate of $\hat{\mathscr{H}}_{0}^{\prime}(t)$
with zero eigenvalue. Note that if $\Phi(\bm{x},t)$ has no nodes, then it is
generally the ground state of $\hat{\mathscr{H}}_{0}^{\prime}(t)$
provided $\mathscr{\hat{H}}_{0}^{\prime}(t)$ is bounded from below.
The procedure of finding $\hat{\mathscr{H}}_{0}^{\prime}(t)$ leads
to 
\begin{align}
\hat{\mathscr{H}}_{0}^{\prime}(t) & =\sum_{i}\frac{\hat{p}_{i}^{2}}{2m}+\frac{\hbar^{2}}{2m}\sum_{i}(\Lambda_{i}^{\prime\prime}+\Lambda_{i}^{\prime2})\nonumber \\
 & +\frac{\hbar^{2}}{m}\sum_{i<j}[\Gamma_{ij}^{\prime\prime}+\Gamma_{ij}^{\prime2}+(\Lambda_{i}-\Lambda_{j})\Gamma_{ij}^{\prime}]\nonumber \\
 & -\frac{\hbar^{2}}{m}\sum_{\substack{i<j<k}
}(\Gamma_{ij}^{\prime}\Gamma_{jk}^{\prime}+\Gamma_{ij}^{\prime}\Gamma_{ki}^{\prime}+\Gamma_{ki}^{\prime}\Gamma_{jk}^{\prime}),\label{eq:H0-prime}
\end{align}
which is the same as the parent Hamiltonian in the real-valued TIJA,
except that $\mathscr{\hat{H}}_{0}^{\prime}(t)$ is now time-dependent.

We conclude this section by noting that the trial wave functions~(\ref{eq:TD-Jastrow})
and ~(\ref{eq:TDIR-Jastrow}) must be normalizable so that $\Psi$
and $\Phi$ do not blow up when particles are far apart.

\section{\label{sec:Imposing-Hermicitiy}Consistency conditions by imposing
Hermicity }

As already advanced, the Hermicity of $\mathscr{\hat{H}}(t)$ may
introduce strong constraints on the functional forms of $f_{ij}(t)$
and $g_{k}(t)$. To find such constraints, in principle, one can rewrite
$\hat{\mathscr{H}}(t)$ in terms of $\Gamma_{ij}(t)$, $\theta_{ij}(t)$,
$\Lambda_{k}(t)$ and $\phi_{k}(t)$ by substituting Eq.~(\ref{eq:complex-trial-WF})
into Eq.~(\ref{eq:scrH-def}) and then impose that the imaginary
part of $V$ vanishes. Specifically, we note that the two-body and
three-body terms can by no means be reduced to a one-body potential
unless they are independent of the particles' positions. This dictates
that $\text{Im}v_{1}^{(i)}$ must be a function of time only. We shall
assume similar constraints on the two-body and three-body interactions,
i.e., that $\text{Im}v_{2}^{(ij)}$ and $\text{Im}v_{3}^{(ijk)}$
are functions of time only. This assumption ignores the fact that
in some cases the three-body interaction $v_{3}^{(ijk)}$ can reduce
to two-body interactions~\citep{Calogero75,yang2022one} for the
sake of simplicity. Based on the above analysis, we introduce 
\begin{equation}
\dot{\widetilde{\mathcal{N}}}_{s}(t)=\text{Im}v_{s}^{(i_{1}\cdots i_{s})},\quad s=1,\,2,\,3,\,\label{eq:cal-Ns}
\end{equation}
where $\widetilde{\mathcal{N}}_{s}(t)$ is a real-valued function of time
and again, the over-dot denotes time derivation. Thus, the Hermiticity
of $\mathscr{\hat{H}}(t)$ imposes 
\begin{equation}
\mathcal{N}(t)\sim\frac{N\hbar}{2m}\widetilde{\mathcal{N}}_{1}(t)+\frac{\hbar N(N-1)}{2m}\widetilde{\mathcal{N}}_{2}(t)+\frac{\hbar N(N-1)(N-2)}{6m}\widetilde{\mathcal{N}}_{3}(t),
\end{equation}
where $\sim$ denotes equivalence up to a constant independent of
time and particles' positions. This is our first result for considering
the complex-valued TDJA. We will exemplify these results in several
examples.

In general, Eq.~(\ref{eq:cal-Ns}) for $s=3$ can lead to complicated
conditions. In this work, we shall restrict our attention to the case
where the three-body interactions are either real-valued, depending
on both coordinates and time or complex-valued and depending on time
only. \textcolor{blue}{The former assumes a vanishing two-body phase angle
$\theta_{ij}(t)$, } while the latter essentially boils the trial function
down to three types of functions 
\begin{equation}
e^{\Gamma_{ij}(t)}\propto\begin{cases}
|x_{ij}|^{\lambda(t)} & \text{CS}\\
|\sinh[c(t)x_{ij}]|^{\lambda(t)} & \text{Hyperbolic}\\
\exp[c(t)|x_{ij}|] & \text{LL}
\end{cases},\label{eq:trial-WF-three}
\end{equation}
corresponding to the celebrated CS, hyperbolic, and LL models, respectively
in the case of TDJA~\citep{delcampo20,Sutherland04}. In all these
cases, the following quantity 
\begin{equation}
W_{3}(\bm{x},\,t)\equiv-\sum_{\substack{i<j<k}
}(\Gamma_{ij}^{\prime}\Gamma_{jk}^{\prime}+\Gamma_{ij}^{\prime}\Gamma_{ki}^{\prime}+\Gamma_{ki}^{\prime}\Gamma_{jk}^{\prime})\label{eq:v3C-def}
\end{equation}
is independent of the particles' coordinates. \textcolor{blue}{Note that care needs to be taken when calculating the derivatives of $\Gamma_{ij}(t)$ according to Eq.~\eqref{eq:trial-WF-three} since it involves absolute value of a function, see e.g., Chapter 5 of Ref.~\cite{Sutherland04}and Supplemental Material of Ref.~\cite{yang2022one} and for details.}  In particular, one can
find that $W_{3}(t)$ reduces to a constant for the trial
wave functions in Eq.~(\ref{eq:trial-WF-three}), i.e., 
\begin{equation}
W_{3}(t)=\begin{cases}
0 & \text{CS}\\
N(N-1)(N-2)\lambda^{2}(t)c^{2}(t)/6 & \text{Hyperbolic}\\
N(N-1)(N-2)c^{2}(t)/6 & \text{LL}
\end{cases}.
\end{equation}
In Appendix~\ref{sec:Hermicitiy}, we show that the two-body phase
angle can be written in a unified expression, 
\begin{equation}
\theta_{ij}(t)=\eta(t)\Gamma_{ij}(t),\,f_{ij}(t)=e^{\Gamma_{ij}(t)[1+\eta(t)]},\label{eq:2-body-PA}
\end{equation}
where $\eta(t)$ is any given real-valued function of time. Then the
three-body term defined in Eq.~(\ref{eq:v3-def}) becomes 
\begin{align}
\sum_{\substack{i<j<k}
}v_{3}^{(ijk)} & =W_{3}^{\text{}}(t)[1+\text{i}\eta(t)]^{2}.\label{eq:U3-special}
\end{align}
Combining Eq.~(\ref{eq:U3-special}) with Eq.~(\ref{eq:cal-Ns}),
one finds $N(N-1)(N-2)\dot{\widetilde{\mathcal{N}}}_{3}(t)/6=2\eta(t)W_{3}(t).$
Note that when $\eta(t)=0$, $\dot{\mathcal{N}}_{3}(t)=0$ but $W_{3}(t)$
may depend on the coordinates in general. We can combine $\dot{\widetilde{\mathcal{N}}}_{2}(t)$
and $\dot{\widetilde{\mathcal{N}}}_{3}(t)$ and obtain

\begin{equation}
\textcolor{blue}{\mathcal{N}(t)\sim\mathcal{N}_{1}(t)+\mathcal{N}_{23}(t)},\label{eq:calN}
\end{equation}
where 
\begin{align}
\mathcal{N}_{1}(t) & \equiv\frac{N\hbar}{2m}\widetilde{\mathcal{N}}_{1}(t),\\
\mathcal{N}_{23}(t) & \equiv\frac{\hbar N(N-1)}{2m}\widetilde{\mathcal{N}}_{2}(t)+\frac{2\hbar}{m}\int_{0}^{t}\eta(\tau)W_{3}(\tau)d\tau.
\end{align}

Next, we note that in general Eq.~(\ref{eq:cal-Ns}) with $s=1,\,2$
yields for these models consistency conditions between the trial functions,
which are discussed in detail in Appendix~\ref{sec:Hermicitiy}.
\textcolor{blue}{Throughout this work, we shall focus on the case
where the particles are embedded in the harmonic trap, i.e., $\text{Re}v_{1}^{(i)}$
is quadratic in $x_{i}$ and therefore  $\Lambda_{k}(t)=-m\omega(t)x_{k}^{2}/(2\hbar)$.
As argued in Appendix~\ref{sec:Consistency-condition-Harmonic},
the one-body consistency condition, in this case, simplifies dramatically.
It explicitly gives the functional
form of $\mathscr{C}_{2}(t)$, i.e., 
\begin{equation}
\mathscr{C}_{2}(t)\equiv\eta(t)\omega(t)+\frac{\dot{\omega}(t)}{2\omega(t)},\label{eq:scrC-def}
\end{equation}
with 
\begin{align}
\dot{\widetilde{\mathcal{N}}}_{1}(t) & =-\frac{m\dot{\omega}(t)}{2\hbar\omega(t)},\\
\phi_{k}(t) & =-\frac{m\dot{\omega}(t)}{4\hbar\omega(t)}x_{k}^{2}+\tau(t).\label{eq:1-body-PA-harmonic}
\end{align}
The two-body consistency condition corresponding to $s=2$ is 
\begin{equation}
\eta(t)\Gamma_{ij}^{\prime\prime}+2\eta(t)\Gamma_{ij}^{\prime2}+\frac{m}{\hbar}\dot{\Gamma}_{ij}-\frac{m}{\hbar}\mathscr{C}_{2}(t)\Gamma_{ij}^{\prime}x_{ij}=\dot{\widetilde{\mathcal{N}}}_{2}(t).\label{eq:2-body-consist-main}
\end{equation}
} The Hamiltonian (\ref{eq:scrH-def}) then becomes 
\begin{align}
\hat{\mathscr{H}}(t) & =\frac{1}{2m}\sum_{i}p_{i}^{2}+\frac{1}{2}m\Omega^{2}(t)\sum_{i}x_{i}^{2}\nonumber \\
 & +\frac{\hbar^{2}}{m}\sum_{i<j}\left(\Gamma_{ij}^{\prime\prime}+[1-\eta^{2}(t)]\Gamma_{ij}^{\prime2}\right)\nonumber \\
 & -\hbar\varpi(t)\sum_{i<j}\Gamma_{ij}^{\prime}x_{ij}-\hbar\sum_{i<j}\frac{d}{dt}[\eta(t)\Gamma_{ij}]+\mathscr{E}(t),\label{eq:TDPHJ}
\end{align}
where the frequency of the trap $\Omega(t)$ is defined as 
\begin{equation}
\Omega^{2}(t)\equiv\omega^{2}(t)+\frac{d}{dt}\left[\frac{\dot{\omega}(t)}{2\omega(t)}\right]-\left[\frac{\dot{\omega}(t)}{2\omega(t)}\right]^{2},\label{eq:omg-def}
\end{equation}
and 
\begin{align}
\mathscr{E}(t) & \equiv-\frac{1}{2}N\hbar\vartheta(t)+\frac{\hbar^{2}}{m}[1-\eta^{2}(t)]W_{3}(t),\label{eq:scrE-def}\\
\varpi(t) & \equiv\omega(t)\left[1-\frac{\eta(t)\dot{\omega}(t)}{2\omega^{2}(t)}\right],\label{eq:omgbar-def}\\
\vartheta(t) & \equiv2\dot{\tau}(t)+\omega(t).\label{eq:vartheta-def}
\end{align}

Upon making the change of variables 
\begin{equation}
\omega(t)=\frac{\omega_{0}}{b^{2}(t)},\,\omega_{0}\equiv\omega(0),
\end{equation}
one immediately observes that $\dot{\omega}(t)/[2\omega(t)]=-\dot{b}(t)/b(t)$
and Eq.~(\ref{eq:omg-def}) becomes 
\begin{equation}
\ddot{b}(t)+\Omega^{2}(t)b(t)=\omega_{0}^{2}b^{-3}(t),\label{eq:Ermakov}
\end{equation}
with the initial condition $b(0)=1$. Remarkably, Eq. (\ref{eq:Ermakov})
is the celebrated Ermakov equation governing the dynamics of the scaling
factor in scale-invariant quantum many-body evolution, emerging when
the interactions of particles have given scaling properties~\citep{delcampo16,gritsev2010scaling,Castin04}.
However, as we shall discuss in Sec.~\ref{sec:Effective-dynamics},
although the effective scale-invariant dynamics also appears here,
the interactions need not be restricted to a given scaling dimension.

Let us remark on some differences with previous approaches in the literature~\cite {delcampo16,gritsev2010scaling,Castin04}.
(i) Works on scale invariance consider  $\dot{b}(0)=0$ when the initial state is stationary.
Here we only impose the condition $b(0)=1$ for the Ermakov equation, and $\dot{b}(0)$ can be arbitrary. (ii) In the Ermakov
equation discussed by the previous literature, it is assumed that
$\Omega_{0}\equiv\Omega(0)$ is always equal to $\omega_{0}\equiv\omega(0)$.
Here, such a constraint does not necessarily hold. As we shall discuss
in detail in Sec.~\ref{sec:Effective-dynamics}, if we further impose
$\dot{b}(0)=0$ and $\Omega_{0}=\omega_{0}$ together with the conditions
for $\eta(t)$ at $t=0$, we obtain $\hat{\mathscr{H}}(0)\Psi(0)=0$,
i.e., at time $t=0$, the TDJA is also an eigenstate of the corresponding
parent Hamiltonian, which is always a requirement for finding the
counterdiabatic driving for scale-invariant dynamics, as discussed
previously \citep{delcampo2013shortcuts,deffner2014classical}. We
emphasize that in our discussion, the initial time $t_{0}$ is
not necessarily zero. In fact, both Eq.~(\ref{eq:Psi-TDJastrow})
and the interaction in Eq.~(\ref{eq:TDPHJ}) can break, in general,
scale-invariance. As we shall see subsequently, both scale-invariant
and non-scale-invariant dynamics follow naturally from our results.

\textcolor{blue}{Finally, we shall refer to $b(t)$ along with $\eta(t)$ and
$\gamma(t)$ [defined in Sec.~\ref{subsec:The-logarithmic-hyperbolic-model}],
as the fundamental  parameters that govern the exact many-body dynamics and regard other time-dependent
parameters as secondary given that they can be derived from the former. Throughout this work, while the
time-dependent parent Hamiltonian~(\ref{eq:TDPHJ}) may contain both
types of parameters, we shall express the generic time-dependent Jastrow
ansatz~(\ref{eq:TD-Jastrow}) mainly in terms of the fundamental
parameters. To this goal, Eq.~(\ref{eq:TD-Jastrow}) can be rewritten
as 
\begin{equation}
\Psi(\bm{x},\,t)=\frac{1}{e^{\mathcal{N}_{23}(t)}b^{N/2}(t)}\prod_{i<j}e^{\Gamma_{ij}(t)[1+\text{i}\eta(t)]}\prod_{k}e^{-\frac{x_{k}^{2}}{2x_{0}^{2}b^{2}(t)}+\text{i}\frac{\dot{b}(t)x_{k}^{2}}{2\omega_{0}x_{0}^{2}b(t)}+\text{i}\tau(t)},\label{eq:Psi-TDJastrow}
\end{equation}
where we have ignore the constant factor $\omega_{0}^{N/4}$ and $x_{0}$
is the length scale of the harmonic trap defined as 
$x_{0}\equiv\sqrt{\frac{\hbar}{m\omega_{0}}}$.
}
\textcolor{blue}{To summarize:
given
the functions $\Gamma_{ij}(t)$ and $\eta(t)$ satisfy the consistency
condition~(\ref{eq:2-body-consist-main}),  the nonequilibrium dynamics of one-dimensional interacting bosons in a harmonic trap described by associated with the Hamiltonian~(\ref{eq:Psi-TDJastrow})
admits 
an exact description  in terms of  Eq.~(\ref{eq:TDPHJ}), which bears
the form of TDJA. Such nonequilibrium dynamics are exemplified by
specifying the functional form of $\Gamma_{ij}(t)$ and $\eta(t)$
in Sec.~\ref{sec:vanishing-eta}-\ref{sec:TD-long-range-LL}.
}Knowledge of the exact nonequilibrium dynamics of strongly-correlated
quantum systems is rare and precious, and we shall explore its applications
to counterdiabatic driving in STA~\citep{demirplak2003adiabatic,demirplak2005assisted,demirplak2008onthe,berry2009transitionless,Torrontegui13,guery-odelin2019shortcuts}
and quench dynamics subsequently.

\section{Shortcuts to adiabaticity for the Jastrow ansatz}

In a general setting, for a given reference Hamiltonian as $\hat{\mathscr{H}}_{0}^{\prime}(t)$,
with instantaneous eigenvectors $\Phi_{n}(t)$ and eigenvalues $\varepsilon_{n}(t)$,
one can consider the use of counterdiabatic control fields $\hat{\mathscr{H}}_{1}^{\prime}(t)$
that are able to drive the dynamics through the adiabatic manifold
of $\Phi_{n}(t)$ parametrized by $t$ \citep{demirplak2003adiabatic,demirplak2005assisted,demirplak2008onthe,berry2009transitionless}.
This is the goal of counterdiabatic driving, also known as transitionless
quantum driving, a technique that provides a universal way to construct
STA protocols \citep{Torrontegui13,guery-odelin2019shortcuts}. The
application of these techniques to quantum fluids has led to manifold
applications ranging from quantum microscopy \citep{delcampo11,Papoular15}
to the engineering of efficient friction-free quantum thermal machines
\citep{Beau16,Deng18Sci,delCampo2018}. Experimental progress has
focused on the case of atomic clouds driven by time-dependent confinements
\citep{Schaff10,Schaff11,Schaff11njp,Rohringer2015,Deng18,Deng18Sci}.

Counterdiabatic driving aims at finding a global Hamiltonian 
\begin{equation}
\hat{\mathscr{H}}^{\prime}(t)=\hat{\mathscr{H}}_{0}^{\prime}(t)+\hat{\mathscr{H}}_{1}^{\prime}(t),\label{eq:Hprime-STA}
\end{equation}
where 
\begin{equation}
\mathscr{\hat{H}}_{0}^{\prime}(t)\Phi_{n}(t)=\varepsilon_{n}(t)\Phi_{n}(t).\label{eq:Hprime-STA-def}
\end{equation}

In general, finding the Hamiltonian $\mathscr{H}^{\prime}(t)$ that satisfies
Eq.~(\ref{eq:Hprime-STA-def}) for a specific eigenstate may not
be easy. Nevertheless, if Eq.~(\ref{eq:Hprime-STA-def}) is valid
for all eigenstates, it can be shown that $\hat{\mathscr{H}}^{\prime}(t)=\text{i}\hbar\sum_{n}\ket{\dot{\Phi}_{n}(t)}\bra{\Phi_{n}(t)}$,
so that 
\begin{equation}
\mathscr{\hat{H}}_{1}^{\prime}(t)=\sum_{n}\left[\text{i}\hbar\ket{\dot{\Phi}_{n}(t)}\bra{\Phi_{n}(t)}-\varepsilon_{n}\ket{\Phi_{n}(t)}\bra{\Phi_{n}(t)}\right].\label{eq:H1prime-exp}
\end{equation}

Known explicit constructions of $\hat{\mathscr{H}}^{\prime}(t)$
are  restricted to particular cases, including harmonic oscillator~\citep{Muga2010},
scale-invariant dynamics~\citep{delcampo2013shortcuts,deffner2014classical},
integrable systems~\citep{Okuyama16}, etc. In the engineering of
STA for many-body systems, several difficulties may occur: (i) Computing
$\hat{\mathscr{H}}_{1}^{\prime}(t)$ requires knowledge of the spectrum
of $\mathscr{\hat{H}}_{0}^{\prime}$,which is difficult to compute. 
(ii) The potential in $\hat{\mathscr{H}}_{1}^{\prime}(t)$ usually
contain nonlocal terms, i.e., terms that combine both the position
and momentum operators and cannot be associated with an external local
potential or interaction. For scale-invariant invariant dynamics,
it was shown that~\citep{delcampo2013shortcuts,deffner2014classical}
these two difficulties can be removed, and the non-locality of the
potential in $\mathscr{\hat{H}}_{1}^{\prime}(t)$ can be gauged away.

However, there is little understanding of the exact counterdiabatic
driving in many-particle systems beyond the scale-invariant dynamics.
To the best of our knowledge, there is not yet an analytical finding
of the counterdiabatic driving that is only valid for a specific
eigenstate. In what follows, we will show that the parent Hamiltonian
of the complex-valued TDJA Eq.~(\ref{eq:TDPHJ}) can be employed
to find the counterdiabatic driving of the prime parent Hamiltonian
of the real-valued TDJA defined in Eq.~(\ref{eq:H0-prime}). As is
clear from Eq.~(\ref{eq:Psi-TDJastrow}), the dynamics is not necessarily
scale-invariant. Furthermore, the construction does not rely on (\ref{eq:H1prime-exp})
and hence provides an example where the construction of an exact counterdiabatic
Hamiltonian for a given eigenstate does not necessarily drive the whole adiabatic manifolds
associated with all the energy eigenstates. Indeed, in stark contrast with the case of scale invariance, 
such construction is simplified by focusing the protocol on a given many-body state.

As already mentioned in the last section, we shall focus on the case
of a harmonic trap with either real-valued two-body trial wave function,
i.e., $\theta_{ij}(t)=0$ or $\Gamma_{ij}$ being \textcolor{blue}{given by Eq.~(\ref{eq:trial-WF-three})}. The
 many-body phase operator and the real-valued TDJA become 
\begin{align}
\mathscr{U}_{P}(\bm{x},\,t) & =\prod_{i<j}e^{\text{i}\eta(t)\Gamma_{ij}(t)}\prod_{k}e^{\text{i}\frac{\dot{b}(t)x_{k}^{2}}{2\omega_{0}x_{0}^{2}b(t)}+\text{i}\tau(t)},\label{eq:scr-U-harmonic}\\
\Phi(\bm{x},\,t) & =\frac{1}{e^{\mathcal{N}_{23}(t)}b^{N/2}(t)}\prod_{i<j}e^{\Gamma_{ij}(t)}\prod_{k}e^{-\frac{x_{k}^{2}}{2x_{0}^{2}b^{2}(t)}}.\label{eq:Phi-def-harmonic}
\end{align}
The parent Hamiltonian of the real-valued TIJA in Eq.~(\ref{eq:H0-prime})
reduces to 
\begin{align}
\hat{\mathscr{H}}_{0}^{\prime}(t) & =\frac{1}{2m}\sum_{i}p_{i}^{2}+\frac{1}{2}m\omega^{2}(t)\sum_{i}x_{i}^{2}+\frac{\hbar^{2}}{m}\sum_{i<j}\left(\Gamma_{ij}^{\prime\prime}+\Gamma_{ij}^{\prime2}\right)\nonumber \\
 & -\hbar\omega(t)\sum_{i<j}\Gamma_{ij}^{\prime}x_{ij}+\mathscr{E}_{0}^{\prime}(t),\label{eq:H0prime-CD}
\end{align}
where 
\begin{equation}
\mathscr{E}_{0}^{\prime}(t)\equiv-\frac{1}{2}N\hbar\omega(t)+\frac{\hbar^{2}}{m}W_{3}(t).
\end{equation}

Since $\Phi(t)$ is the instantaneous eigenstate of $\hat{\mathscr{H}}_{0}^{\prime}(t)$,
keeping the adiabatic evolution along $\Phi(t)$ requires infinitely
slow driving of $\hat{\mathscr{H}}_{0}^{\prime}(t)$. Using Eq.~(\ref{eq:U-connection}),
we find the counterdiabatic driving $\hat{\mathscr{H}}^{\prime}(t)$
is (see details in Appendix~\ref{sec:PHJ-rotating}) 
\begin{align}
\hat{\mathscr{H}}^{\prime}(t) & =\frac{1}{2m}\sum_{i}p_{i}^{2}+\frac{1}{2}m\omega^{2}(t)\sum_{i}x_{i}^{2}+\frac{\hbar^{2}}{m}\sum_{i<j}\left(\Gamma_{ij}^{\prime\prime}+\Gamma_{ij}^{\prime2}\right)\nonumber \\
 & -\hbar\omega(t)\sum_{i<j}\Gamma_{ij}^{\prime}x_{ij}+\mathscr{E}_{0}^{\prime}(t)+\hat{\mathscr{H}}_{1}^{\prime}(t),\label{eq:Hprime-CD}
\end{align}
where $\mathscr{\hat{H}}_{1}^{\prime}(t)$ contains the non-local one and two-body 
terms 
\begin{equation}
\hat{\mathscr{H}}_{1}^{\prime}(t)\equiv\frac{\dot{b}(t)}{2b(t)}\sum_{i}\{x_{i},\,p_{i}\}+\frac{\hbar\eta(t)}{2m}\sum_{i<j}\{\Gamma_{ij}^{\prime}(t),\,p_{ij}\},\label{eq:H1prime-def}
\end{equation}
and $p_{ij}\equiv p_{i}-p_{j}$. The counterdiabatic control term
$\mathscr{\hat{H}}_{1}^{\prime}(t)$ is non-local in the sense that
it contains the sum of the one-body squeezing operator, familiar from
the study of driven of scale-invariant systems \citep{Ibanez12,Jarzynski13,delcampo2013shortcuts,deffner2014classical},
and its two-body generalization, involving $\{\Gamma_{ij}^{\prime}(t),\,p_{ij}\}$.

Although the momentum dependence of $\hat{\mathscr{H}}^{\prime}(t)$ can make it difficult
to implement in a given platform, we note that the parent Hamiltonian of the complex-valued
TDJA $\hat{\mathscr{H}}(t)$ provides a feasible alternative, being local as long
as the many-body phase operation $\mathscr{U}_{P}(\bm{x},\,t)$ can
be implemented. Furthermore, it is straightforward to check that when
$\eta(t)=0$, $\hat{\mathscr{H}}_{0}^{\prime}(t)$ bears the same
form as $\hat{\mathscr{H}}(t)$, except for the fact that the driving frequency
in the former is $\omega(t)$ while in the latter it is $\Omega(t)$, and that they also differ in the expressions for the time-dependent constants $\mathscr{E}_{0}^{\prime}(t)$
and $\mathscr{E}(t)$. Similar observations are discussed in Ref.~\citep{delcampo2013shortcuts,deffner2014classical},
where the interactions have scaling properties. However, our approach lifts such a restriction.

In what follows, we shall apply the analysis in the preceding sections to describe the
exact dynamics of these one-dimensional strongly-correlated systems. To this end, we
discuss specifically the case of vanishing $\eta(t)$ and the CS, hyperbolic, and LL models.

\section{\label{sec:vanishing-eta}The case of vanishing $\eta(t)$}

As we have discussed before, when $\eta(t)=0$, the trial wave function
is not necessarily restricted to Eq.~(\ref{eq:trial-WF-three}),
but can be generic as long as it respects the bosonic symmetry. In
this case, $W_{3}(t)$ is a three-body potential that depends on both
coordinates and time and $\dot{\widetilde{\mathcal{N}}}_{3}(t)=0$. The
two-body consistency condition~(\ref{eq:2-body-consist-main}) simplifies
to 
\begin{equation}
\frac{m}{\hbar}\dot{\Gamma}_{ij}+\frac{\dot{b}(t)}{b(t)}\frac{m}{\hbar}\Gamma_{ij}^{\prime}x_{ij}=\dot{\widetilde{\mathcal{N}}}_{2}(t).\label{eq:consistency-vanishing-eta}
\end{equation}
This condition can be satisfied if  $\Gamma_{ij}(t)$ takes
the form 
\begin{equation}
\Gamma_{ij}(t)=\Gamma\left(\frac{c_{0}x_{ij}}{b(t)}\right), \label{eq:vanishing-eta-Gamma}
\end{equation}
up to a constant with $\dot{\widetilde{\mathcal{N}}}_{2}(t)=0$. As a
result, Eq.~(\ref{eq:TDPHJ}) becomes 
\begin{align}
\hat{\mathscr{H}}(t) & =\frac{1}{2m}\sum_{i}p_{i}^{2}+\frac{1}{2}m\Omega^{2}(t)\sum_{i}x_{i}^{2}+\frac{\hbar^{2}}{m}\sum_{i<j}\left(\Gamma_{ij}^{\prime\prime}+\Gamma_{ij}^{\prime2}\right)\nonumber \\
 & -\hbar\omega(t)\sum_{i<j}\Gamma_{ij}^{\prime}x_{ij}+\frac{\hbar^{2}}{m}W_{3}(\bm{x},\,t)-\frac{1}{2}N\hbar\vartheta(t).\label{eq:H-vanishing-eta}
\end{align}
Note that the three-body interaction $W_{3}(\bm{x},\,t)$ defined
in Eq.~(\ref{eq:v3C-def}) depends on time and the coordinates
in this case. The dynamics is given by 
\begin{equation}
\Psi(\bm{x},\,t)=\frac{1}{[b(t)]^{N/2}}\prod_{i<j}e^{\Gamma\left(\frac{c_{0}x_{ij}}{b(t)}\right)}\prod_{k}e^{-\frac{x_{k}^{2}}{2x_{0}^{2}b^{2}(t)}+\text{i}\frac{\dot{b}(t)x_{k}^{2}}{2\omega_{0}x_{0}^{2}b(t)}+\text{i}\tau(t)},\label{eq:Psi-vanishing-eta}
\end{equation}
which is scale-invariant. However, we note that the interactions of
the particles do not necessarily have scaling properties, as we will
see in the subsequent sections when $\Gamma_{ij}(t)$ takes any of
the forms in Eq.~(\ref{eq:trial-WF-three}).

The prime parent Hamiltonian~(\ref{eq:H0prime-CD}) is the same
as Eq.~(\ref{eq:H-vanishing-eta}), but for the replacement of $\Omega(t)$
with $\omega(t)$. The auxiliary driving now takes a simple form 
\begin{equation}
\hat{\mathscr{H}}_{1}^{\prime}(t)=\frac{\dot{b}(t)}{2b(t)}\sum_{i}\{x_{i},\,p_{i}\}.\label{eq:control-CSA}
\end{equation}
It is worth noting that this is the familiar counterdiabatic term
previously discussed for STA governed by scale-invariance in Refs.~\citep{delcampo2013shortcuts,deffner2014classical,Beau16},
where the interparticle interactions have the same scaling dimension
as the kinetic energy \citep{gritsev2010scaling,delcampo11}.

Finally, we conclude this section by noting that with the complex-valued
TDJA~(\ref{eq:Psi-vanishing-eta}), one can compute the survival
probability defined as 
\begin{align}
\text{Pr}_{s}(t,\,t_{0}) & \equiv\frac{1}{\mathcal{C}^{2}}|\braket{\Psi(t)\big|\Psi(t_{0})}|^{2},\label{eq:SP-def}
\end{align}
where $t_{0}$ is the initial time, which is not necessarily set to
be zero, as further discussed in Sec.~\ref{sec:Quench-dynamics}
and $\mathcal{C}$ is the normalization of the TDJA
\begin{equation}
\mathcal{C}\equiv\braket{\Psi(t)\big|\Psi(t)}=\int_{\mathbb{R}^N}d^Nx\Psi(\bm{x},t)\Psi^{*}(\bm{x},\,t)\label{eq:calC-def}
\end{equation}
The evaluation of $\braket{\Psi(t)\big|\Psi(t_{0})}$ and $\mathcal{C}$
involves the following multi-dimensional integrals \begin{widetext}
\begin{equation}
\braket{\Psi(t)\big|\Psi(t_{0})}=\frac{e^{-\text{i}N[\tau(t)-\tau(t_{0})]}}{[b(t)b(t_{0})]^{N/2}}\int_{\mathbb{R}^N}d^Nx\exp\left(\sum_{i<j}\left[\Gamma\left(\frac{c_{0}x_{ij}}{b(t)}\right)+\Gamma\left(\frac{c_{0}x_{ij}}{b(t_{0})}\right)\right]-\frac{1}{2x_{0}^{2}}\sum_{k}x_{k}^{2}\left[\frac{1}{b^{2}(t)}+\frac{1}{b^{2}(t_{0})}-\frac{\text{i}}{\omega_{0}}\left(\frac{\dot{b}(t_{0})}{b(t_{0})}-\frac{\dot{b}(t)}{b(t)}\right)\right]\right).\label{eq:unnorm-amplitude}
\end{equation}
One can further make the change of variables $\alpha(t,\,t_{0})x_{k}=z_{k}$,
where 
\begin{equation}
\alpha^{2}(t,\,t_{0})\equiv\frac{1}{2}\left[\frac{1}{b^{2}(t)}+\frac{1}{b^{2}(t_{0})}-\frac{\text{i}}{\omega_{0}}\left(\frac{\dot{b}(t_{0})}{b(t_{0})}-\frac{\dot{b}(t)}{b(t)}\right)\right],
\end{equation}
and $\alpha(t_{0},\,t_{0})=\frac{1}{b(t_{0})}$. As a result, Eq.~(\ref{eq:unnorm-amplitude})
becomes
\begin{equation}
\braket{\Psi(t)\big|\Psi(t_{0})}=\frac{e^{-\text{i}N[\tau(t)-\tau(t_{0})]}}{[b(t)b(t_{0})]^{N/2}\alpha^{N}(t,\,t_{0})}\int_{\mathscr{C}^N}d^Nz\exp\left(\sum_{i<j}\left[\Gamma\left(\frac{c_{0}}{b(t)}\frac{z_{ij}}{\alpha(t,\,t_{0})}\right)+\Gamma\left(\frac{c_{0}}{b(t_{0})}\frac{z_{ij}}{\alpha(t,\,t_{0})}\right)\right]-\sum_{k}z_{k}^{2}/x_{0}^{2}\right),\label{eq:SA-zvar}
\end{equation}
\end{widetext}where the integral contour $\mathscr{C}$ is $(-\infty e^{\text{i}\text{arg}[\alpha(t,\,t_{0})]},\,\infty e^{\text{i\text{arg}[\ensuremath{\alpha}(t,\,\ensuremath{t_{0}})]}})$,
with $\text{arg}[\alpha(t,\,t_{0})|]$ denoting the argument of the
complex variable $\alpha(t,\,t_{0})$. 

\textcolor{blue}
{
The absolute value in the two-body wave function $\Gamma_{ij}$ or
$e^{\Gamma_{ij}}$ imposed by the bosonic symmetry {[}see e.g. Eq.~(\ref{eq:trial-WF-three}){]}
renders the integrand of Eq.~(\ref{eq:SA-zvar}) non-analytic on
the complex plane in general. Therefore, one \textit{cannot}
change integral contour from $\mathscr{C}$ to $[-\infty,\,\infty]$.
}

\textcolor{blue}{
Nevertheless, for the time-dependent Calogero-Sutherland model discussed in Sec.~\ref{subsec:CS-Scale-Invariant}, thanks to the scaling property $e^{\Gamma(\lambda x_{ij})}=\lambda e^{\Gamma(x_{ij})}$ in this case, the integrand of Eq.~(\ref{eq:SA-zvar}) is analytic [see Eq.~\eqref{eq:CS-amplitude-int}]. We note that the survival probability for the scale-invariant
dynamics in the CSM and in interacting ultracold
gases in the unitary limit have been computed in Refs.~\citep{delcampo16,delcampo21},
implicitly assuming the analytic continuation mentioned above.
}

\textcolor{blue}
{
On the other hand, for the
hyperbolic model and the long-range Lieb Linger model discussed in Sec.~\ref{sec:The-time-dependent-Hyperbolic} and Sec.~\ref{sec:TD-long-range-LL} respectively, 
one can estimate the long-time asymptotic behavior of
the survival probability through Eq.~(\ref{eq:SA-zvar}), Eq.~(\ref{eq:SA-zvar}) cannot be analytically continued to the real line but remain bounded. As a result, we can estimate the scaling behavior of the survival probability with respect to the number of particles. }

\section{\label{sec:TD-CS}The time-dependent Calogero-Sutherland models}

The CS model has many applications in different fields. As
we have discussed, it describes ultracold atoms confined in tight
waveguides in the Tonks-Girardeau limit and its generalizations with
inverse-square interactions. It also appears naturally in matrix models,
random matrix theory, and quantum chaos \citep{Forrester10,Haake,Marino15},
even if the system is itself integrable. As we show next, our framework
accounts for the dynamics of the CS model and its generalizations.

We start with the generic time-dependent Jastrow ansatz without assuming
scale invariance by taking $\Gamma_{ij}(t)=\lambda(t)\ln|x_{ij}|$
in Eq.~(\ref{eq:Psi-TDJastrow}) so that $\Gamma_{ij}^{\prime}=\lambda(t)/x_{ij}$.
The three-body term $W_{3}(\bm{x},\,t)$ vanishes in this case regardless
of the value of $\eta(t)$. The two-body consistency equation~(\ref{eq:2-body-consist-main})
yields two nontrivial cases~\footnote{The case $\lambda(t)=0$ is trivial}:

\begin{enumerate}[label={\Alph*)}]

\item When $\eta(t)=0$ and $\lambda(t)=\lambda_{0}$, with $\lambda_{0}$
being an arbitrary time-independent real number, $\widetilde{\mathcal{N}}_{3}(t)=0$,
and 
\begin{equation}
\dot{\widetilde{\mathcal{N}}}_{2}(t)=-\frac{m\dot{\omega}(t)\lambda_{0}}{2\hbar\omega(t)},\:\dot{\mathcal{N}}_{23}(t)=\frac{\dot{b}(t)\lambda_{0}N(N-1)}{2b(t)}.\label{eq:N23-dot}
\end{equation}

\item When $\lambda(t)=1/2$ with arbitrary $\eta(t)$, $\widetilde{\mathcal{N}}_{3}(t)=0$.
$\dot{\widetilde{\mathcal{N}}}_{2}(t)$ and $\dot{\mathcal{N}}_{23}(t)$
are still given by Eq.~(\ref{eq:N23-dot})  with $\lambda_{0}=1/2$.

\end{enumerate}

One can observe that shifting $\Gamma_{ij}(t)$ by a
time-dependent function preserves the condition in Eq.~(\ref{eq:2-body-consist-main}).
It only shifts the $\dot{\mathcal{N}}_{2}(t)$ by the same time-dependent
function. \textcolor{blue}{Thus, for case A ,  by replacing  $\Gamma_{ij}(t)=\lambda_0 \ln|x_{ij}|$ with  $\lambda_0(\ln |x_{ij}|-\ln b(t))$,  the equation ~\eqref{eq:consistency-vanishing-eta} is satisfied with $\dot{\mathcal{\tilde{N}}}_{2}(t)=0$ for the case of vanishing $\eta(t)$.}
We now discuss these two cases respectively.

\subsection{The Calogero-Sutherland model: Recovering the scaling-invariant dynamics\label{subsec:CS-Scale-Invariant}}

In the CS model, 
\begin{equation}
\mathcal{N}_{23}(t)=\frac{1}{2}\lambda_{0}N(N-1)\ln b(t).\label{eq:N-scale-invaraint}
\end{equation}
Eq.~(\ref{eq:Psi-TDJastrow}) becomes 
\begin{equation}
\Psi(\bm{x},\,t)=\frac{1}{[b(t)]^{N/2}}\prod_{i<j}\bigg|\frac{x_{ij}}{b(t)}\bigg|^{\lambda_{0}}e^{-\frac{\sum_{k}x_{k}^{2}}{2x_{0}^{2}b^{2}(t)}+\text{i}\frac{\dot{b}(t)\sum_{k}x_{k}^{2}}{2\omega_{0}x_{0}^{2}b(t)}+\text{i}N\tau(t)},\label{eq:Psi-CS-scale-invariant}
\end{equation}
which is the scale-invariant dynamics for the well-known Calogero-Sutherland
model in the literature~\citep{Sutherland98,DaeYupSong01,DaeYupSong02,delcampo16}.
Note that the norm of this many-body function is preserved in time
upon making the change of variables $x_{i}/b(t)\to x_{i}$.

Equation~(\ref{eq:TDPHJ}) becomes 
\begin{equation}
\hat{\mathscr{H}}(t)=\frac{1}{2m}\sum_{i}p_{i}^{2}+\frac{1}{2}m\Omega^{2}(t)\sum_{i}x_{i}^{2}+\frac{\hbar^{2}}{m}\sum_{i<j}\frac{\lambda_{0}[\lambda_{0}-1]}{x_{ij}^{2}}+\bar{\mathscr{E}}(t),\label{eq:TDPHJ-CS-regular}
\end{equation}
where 
\begin{equation}
\mathscr{\bar{E}}(t)=\mathscr{E}(t)-\frac{1}{2}N(N-1)\hbar\omega(t)\lambda_{0}.\label{eq:0-energy-CS}
\end{equation}
Upon making $\omega(t)$ time-independent and setting $\tau(t)=0$,
Eq.~(\ref{eq:0-energy-CS}) agrees with the one found from time-independent
Jastrow ansatz~\citep{delcampo20,Calogero71}. The position-independent
phase  $\tau(t)$ can be determined by imposing $\mathscr{\bar{E}}(t)=0$,
leading to 
\begin{equation}
\tau(t)=-\frac{1}{2}[(N-1)\lambda_{0}+1]\int^{t}\frac{\omega_{0}}{b^{2}(s)}ds.
\end{equation}

Finally, for the counterdiabatic driving $\hat{\mathscr{H}}^{\prime}(t)=\mathscr{\hat{H}}_{0}^{\prime}(t)+\hat{\mathscr{H}}_{1}^{\prime}(t)$,
the reference Hamiltonian Eq.~(\ref{eq:H0prime-CD}) becomes 
\begin{equation}
\hat{\mathscr{H}}_{0}^{\prime}(t)=\sum_{i}\frac{p_{i}^{2}}{2m}+\frac{1}{2}m\omega^{2}(t)\sum_{i}x_{i}^{2}+\frac{\hbar^{2}}{m}\sum_{i<j}\frac{\lambda_{0}(\lambda_{0}-1)}{x_{ij}^{2}}+\mathscr{\bar{E}}_{0}^{\prime}(t),\label{eq:H0prime-CS}
\end{equation}
where 
\begin{equation}
\mathscr{\bar{E}}_{0}^{\prime}(t)=\mathscr{E}_{0}^{\prime}(t)-\frac{1}{2}N(N-1)\hbar\omega(t)\lambda_{0}.
\end{equation}
The auxiliary control field is given by Eq.~(\ref{eq:control-CSA}).
We see that for counterdiabatic driving or the parent Hamiltonian of
the real-valued TIJA, Eq.~(\ref{eq:Hprime-CD}) is the same as the
one found in Refs.~\citep{delcampo2013shortcuts,deffner2014classical},
which takes the advantage of the scale-invariance of the inverse square
potential in the Calogero-Sutherland model.

While $\hat{\mathscr{H}}^{\prime}(t)$ involves a squeezing term, in  $\hat{\mathscr{H}}(t)$ the nonlocal terms
disappear. In this particular case, $\hat{\mathscr{H}}(t)$ is scale invariant
and hence $\hat{\mathscr{H}}^{\prime}(t)$ can keep the adiabatic
evolution of all the eigenstates of $\hat{\mathscr{H}}_{0}^{\prime}(t)$~\citep{delcampo2013shortcuts,deffner2014classical},
and not only that through the adiabatic trajectory prescribed by $\Phi(t)$.

The analytic expression for the survival probability for the CS model
was first obtained exactly in Ref.~\citep{delcampo16}. Here, we reproduce
it  carefully, addressing the issue of analytic continuation for
Eq.~ (\ref{eq:SA-zvar}). By taking advantage of the scaling property
of the two-body wave function, we can rewrite Eq.~ (\ref{eq:SA-zvar})
as\begin{widetext}
\begin{equation}
\braket{\Psi(t)\big|\Psi(t_{0})}=\frac{1}{[b(t)b(t_{0})]^{N/2}\alpha(t,\,t_{0}){}^{N}}\int_{\mathscr{C}^N}d^Nz\prod_{i<j}\frac{z_{ij}{}^{2\lambda_{0}}}{\alpha(t,\,t_{0}){}^{2\lambda_{0}}[b(t)b(t_{0})]^{\lambda_{0}}}e^{-\sum_{k}z_{k}^{2}/x_{0}^{2}}\label{eq:CS-amplitude-int}.
\end{equation}
\end{widetext}
\textcolor{blue}{The integrand is analytic in each variable $z_{k}$ and decays to zero exponentially as $|z_{k}|\to\infty$. Thus for each
single integration, one can analytically continue the integration domain
from $\mathscr{C}$ to $\mathbb{R}$.
Upon substituting the overlap integral into
(\ref{eq:SP-def}), one can find the multi-dimensional integrals in
the denominator and the numerator cancel with each other, and the survival
probability  takes the simple form 
\begin{equation}
\text{Pr}_{s}(t,\,t_{0})=\left[\frac{1}{b(t)b(t_{0})|\alpha(t,\,t_{0})|^{2}}\right]^{N[1+\lambda_{0}(N-1)]}.\label{eq:SP-CS}
\end{equation}
}

\subsection{The logarithmic Calogero-Sutherland model }

For case (ii), where $\Gamma_{ij}=\ln|x_{ij}|/2$, the complex-valued
TDJA~(\ref{eq:Psi-TDJastrow}) becomes 
\begin{equation}
\Psi(\bm{x},\,t)=\frac{1}{[b(t)]^{N/2}}\prod_{i<j}\bigg|\frac{x_{ij}}{b(t)}\bigg|^{1/2}|x_{ij}|^{\text{i}\eta(t)/2}e^{-\frac{\sum_{k}x_{k}^{2}}{2x_{0}^{2}b^{2}(t)}+\text{i}\frac{\dot{b}(t)\sum_{k}x_{k}^{2}}{2\omega_{0}x_{0}^{2}b(t)}+\text{i}N\tau(t)}.\label{eq:Psi-LogCS}
\end{equation}
The corresponding parent Hamiltonian is 
\begin{align}
\hat{\mathscr{H}}(t) & =\frac{1}{2m}\sum_{i}p_{i}^{2}+\frac{1}{2}m\Omega^{2}(t)\sum_{i}x_{i}^{2}+\bar{\mathscr{E}}(t)\nonumber \\
 & -\frac{\hbar^{2}}{m}\sum_{i<j}\frac{\eta^{2}(t)+1}{4x_{ij}^{2}}-\frac{\hbar\dot{\eta}(t)}{2}\sum_{i<j}\ln|x_{ij}|,\label{eq:TDPHJ-CS}
\end{align}
where 
\begin{equation}
\bar{\mathscr{E}}(t)=\mathscr{E}(t)-\frac{1}{4}N(N-1)\hbar\varpi(t),
\end{equation}
and $\varpi(t)$ is defined in Eq.~(\ref{eq:omgbar-def}).

When $\dot{\eta}(t)\neq0$, the inverse square interaction has a different
scaling exponent from the logarithmic interaction. Nevertheless, as
is clear from Eq.~(\ref{eq:Psi-LogCS}), the dynamics given by the
complex-valued TDJA still preserves the scale-invariance. Note that
scale-invariant dynamics only requires that the amplitude of the many-body
wave function scales with $b(t)$, i.e., as in Eq. (\ref{densityinv}).
There are no restrictions on the one-body and two-body phases.

Furthermore, the case of $\eta(t)=\eta_{0}\neq0$ corresponds to the
CS model with interaction strength taking the threshold value $1/4$,
below which the thermodynamic limit does not exist. Nevertheless,
this regime still makes sense as long as the number of particles remains
finite. Our results indicate that in this regime, the eigenstate of
the CS is still of the Jastrow form but with a non-vanishing two-body
phase, which cannot be described by the real-valued TIJA.

Finally, we note that in this case, the prime parent Hamiltonian or
the reference Hamiltonian $\mathscr{\hat{H}}_{0}^{\prime}(t)$ is
the same form as Eq.~(\ref{eq:H0prime-CS}) with $\lambda_{0}$ set
to be $1/2$. The control field for STA now becomes 
\begin{equation}
\mathscr{\hat{H}}_{1}^{\prime}(t)\equiv\frac{\dot{b}(t)}{2b(t)}\sum_{i}\{x_{i},\,p_{i}\}+\frac{\hbar\eta(t)}{4m}\sum_{i<j}\bigg\{\frac{1}{x_{ij}},\,p_{ij}\bigg\}.\label{eq:control-CSB}
\end{equation}
We observe an interesting result: For $\lambda_{0}=1/2$, we actually
give two different counterdiabatic Hamiltonians which can steer the
same adiabatic evolution of $\Phi(t)$, whose control field is given
by Eq.~(\ref{eq:control-CSA}) and Eq.~(\ref{eq:control-CSB}),
respectively. As shown previously, the former satisfies Eq.~(\ref{eq:H1prime-exp})
and the latter is a counterdiabatic driving that does not satisfy
Eq.~(\ref{eq:H1prime-exp}).

The non-vanishing two-body phase angle $\eta(t)$ presents a challenge
in the exact evaluation of the survival probability, even when using
the change of variables in Sec.~\ref{sec:vanishing-eta}. \textcolor{blue}{However,
one can find an upper bound to the survival probability by resorting to the following inequality 
\begin{multline}
|\langle\Psi(t)\big|\Psi(t_{0})\rangle|\leq\frac{1}{[b(t)b(t_{0})|\alpha(t,\,t_{0})|^2]^{N[1+(N-1)\lambda_{0}]/2}}\\
\times\int_{\mathbb{R}^N}d^Nz\prod_{i<j}z_{ij}^{2\lambda_{0}}e^{-\sum_{k}z_{k}^{2}/x_{0}^{2}},
\end{multline}
with $\lambda_{0}=1/2$, which is saturated at the initial time $t_{0}$
due to the cancellation of the two-body phase. Therefore, we obtain
that the survival probability is upper bounded by 
\begin{equation}
\text{Pr}_{s}(t,\,t_{0})\leq \left[\frac{1}{b(t)b(t_{0})|\alpha(t,\,t_{0})|^{2}}\right]^{N[1+\lambda_0(N-1)]}.
\end{equation}
}

\section{\label{sec:The-time-dependent-Hyperbolic} The time-dependent Hyperbolic
models}

Hyperbolic models have been long studied and are characterized by
both hard-core and finite-range interactions. In the limit of low
densities, the finite-range contribution leads to exponentially decaying
interparticle potential reminiscent of the Toda lattice, both in the
homogeneous \citep{Sutherland04} and trapped cases \citep{delcampo20}.
Hyperbolic models also admit generalizations to higher spatial dimensions
\citep{BeauDC21}. Let us focus on their exact quantum dynamics.

We choose the two-body  function $\Gamma_{ij}(t)$ according to the groundstate wavefunction  in the hyperbolic models, 
\begin{align}
\Gamma_{ij}(t) & =\lambda(t)\ln|\sinh[c(t)x_{ij}]|,\\
\Gamma_{ij}^{\prime}(t) & =\lambda(t)c(t)\coth[c(t)x],\\
\Gamma_{ij}^{\prime\prime}(t) & =\textcolor{blue}{-\lambda(t)c^{2}(t)\csch^{2}[c(t)x]},
\end{align}
and 
\begin{equation}
\dot{\Gamma}_{ij}=\lambda(t)\dot{c}(t)\coth[c(t)x_{ij}]x_{ij}+\dot{\lambda}(t)\ln|\sinh[c(t)x]|.
\end{equation}
The two-body consistency condition~(\ref{eq:2-body-consist-main})
leads to the following two cases:

\begin{enumerate}[label={\Alph*)  }]

\item When $ \lambda(t)=\lambda_{0}$, $\eta(t)=0$, $c(t)=c_{0}/b(t)$ (where $c_{0}$ is the initial value of $c(t)$), and $\dot{\widetilde{\mathcal{N}}}_{2}(t)=0$.
Thus, one can take $\mathcal{N}_{23}(t)=0$.

\item When $\lambda_{0}=1/2$, $\eta(t)$ can be arbitrary, and $c(t)$
satisfies 
\begin{equation}
\frac{\dot{c}(t)}{c(t)}=\frac{\eta(t)\omega_{0}}{b^{2}(t)}-\frac{\dot{b}(t)}{b(t)},\label{eq:c-Log-Hyperbolic}
\end{equation}
with 
\begin{equation}
\dot{\widetilde{\mathcal{N}}}_{2}(t)=\frac{\eta(t)c^{2}(t)}{2}, \,\mathcal{\dot{N}}_{23}(t)=\frac{\hbar N(N^{2}-1)\eta(t)c^{2}(t)}{12m}.
\end{equation}
\end{enumerate}

\subsection{The hyperbolic model\label{subsec:The-hyperbolic-model}}

For this case $\Gamma_{ij}(t)$ adopts the form in Eq.~(\ref{eq:vanishing-eta-Gamma}) and 
the complex-valued TDJA~(\ref{eq:Psi-TDJastrow}) becomes 
\begin{equation}
\Psi(t)=\frac{1}{b^{N/2}(t)}\prod_{i<j}\bigg|\sinh\left[\frac{c_{0}x_{ij}}{b(t)}\right]\bigg|^{\lambda_{0}}\prod_{k}e^{-\frac{x_{k}^{2}}{2x_{0}^{2}b^{2}(t)}+\text{i}\frac{\dot{b}(t)x_{k}^{2}}{2\omega_{0}x_{0}^{2}b(t)}+\text{i}\tau(t)}. \label{eq:Psi-Hyperbolic-A}
\end{equation}
The corresponding parent Hamiltonian~(\ref{eq:TDPHJ}) reads 
\begin{eqnarray}
\hat{\mathscr{H}}(t) & =&\frac{1}{2m}\sum_{i}p_{i}^{2}+\frac{1}{2}m\Omega^{2}(t)\sum_{i}x_{i}^{2}\nonumber \\
 & &+\frac{\hbar^{2}}{m}\sum_{i<j}\frac{\lambda_{0}(\lambda_{0}-1)c^{2}(t)}{\sinh^{2}[c(t)x_{ij}]}\nonumber \\
 & &-\hbar\lambda_{0}\omega(t)c(t)\sum_{i<j}\coth[c(t)x_{ij}]x_{ij}+\bar{\mathscr{E}}(t),\label{eq:H-Hyperbolic-A}
\end{eqnarray}
where $\Omega(t)$ is given by the Ermakov equation~(\ref{eq:Ermakov})
and 
\begin{equation}
\bar{\mathscr{E}}(t)=\mathscr{E}(t)+\frac{N(N-1)\hbar^{2}\lambda_{0}^{2}c^{2}(t)}{2m}.
\end{equation}

Since in this case $\eta(t)=0$, the reference Hamiltonian or the
prime parent Hamiltonian is given by Eq.~(\ref{eq:H0prime-CD}),
which has the same functional form as Eq.~(\ref{eq:H-Hyperbolic-A}),
replacing $\Omega(t)$ with $\omega(t)$ and replacing $\bar{\mathscr{E}}(t)$
with $\bar{\mathscr{E}}_{0}^{\prime}(t)$, where 
\begin{equation}
\bar{\mathscr{E}}_{0}^{\prime}(t)=\mathscr{E}_{0}^{\prime}(t)+\frac{N(N-1)\hbar^{2}\lambda_{0}^{2}c^{2}(t)}{2m}.\label{eq:scrE0bar-Hyperbolic-A}
\end{equation}
It is straightforward to that check the auxiliary control term $\hat{\mathscr{H}}_{1}^{\prime}(t)$ is the same as in
Eq.~(\ref{eq:control-CSA}), according to the general expression~(\ref{eq:H1prime-def}).

The hyperbolic model reduces to the CS model for small values of $c_{0}$.
In general, as $c_{0}$ increases the scale-invariance of both $\mathscr{H}_{0}^{\prime}(t)$
and the dynamics given by $\Phi(t)$ or $\Psi(t)$ is violated. Remarkably,
we see that the control field, which keeps the adiabatic evolution
of the complex-valued TDJA does not change when varying $a_{0}$.
It is worth noting that the dynamics given by Eq.~(\ref{eq:Psi-Hyperbolic-A})
is scale-invariant, while for the hyperbolic interaction in Eq.~(\ref{eq:H-Hyperbolic-A}), it
does not have any scaling properties.

In many situations, such as in free expansions, the
scaling factor $b(t)$ satisfies 
\begin{equation}
\lim_{t\to\infty}b(t)=\infty,\,\lim_{t\to\infty}\frac{\dot{b}(t)}{b(t)}=\nu_{\infty}=\text{const}.\label{eq:b-property}
\end{equation}
This is the case in time-of-flight experiments in which particles
are released after suddenly switching off a confining trap. The same asymptotic behavior of $b(t)$ also
applies to quenches  of the interparticle interactions, as we
will see in Sec.~\ref{sec:Quench-dynamics}. For $t\to\infty$, one
finds that $|\alpha(t,\,t_{0})|$ approaches a constant $|\alpha_{\infty}(t_{0})|$
asymptotically, where 
\begin{equation}
|\alpha_{\infty}(t_{0})|\equiv\frac{1}{\sqrt{2}}\left[\frac{1}{b(t_{0})^{4}}+\frac{\text{1}}{\omega_{0}^{2}}\left(\frac{\dot{b}(t_{0})}{b(t_{0})}-\nu_{\infty}\right)^{2}\right]^{1/4}.
\end{equation}

Even though, in this case, the analytic continuation of the integral
in Eq.~(\ref{eq:SA-zvar}) to the real line is not possible, one
still can obtain the long-time asymptotic behavior of the survival
probability given that Eq.~(\ref{eq:b-property}) is satisfied. This is
done by expanding $\sinh\left[c_{0}z_{ij}/\left(\alpha(t,\,t_{0})b(t)\right)\right]$
at late times, leading to \begin{widetext}
\begin{equation}
{\braket{\Psi(t)\big|\Psi(t_{0})}\propto\frac{e^{-\text{i}N[\tau(t)-\tau(t_{0})]}}{[b(t)b(t_{0})]^{N/2}\alpha^{N}(t,\,t_{0})}\int_{\mathscr{C}^N}d^Nz\prod_{i<j}\bigg|\frac{c_{0}z_{ij}}{\alpha_{\infty}(t_{0})b(t)}\sinh \left[\frac{c_{0}z_{ij}}{\alpha_{\infty}(t_{0})b(t_{0})}\right]\bigg|^{\lambda_{0}}e^{-\sum_{k}z_{k}^{2}/x_{0}^{2}}.}
\end{equation}
\end{widetext}Thus, as $t\to\infty$ , asymptotically the survival
probability in the generalized hyperbolic model is given by 
\begin{equation}
\text{Pr}_{s}(t,\,t_{0})\propto b(t)^{-N[1+\lambda_{0}(N-1)]},
\end{equation}
which has the same time scaling as in the CS model.

\subsection{The logarithmic hyperbolic model\label{subsec:The-logarithmic-hyperbolic-model}}

For this case, we  define the dimensionless coefficient 
\begin{equation}
\gamma(t)\equiv\frac{1}{c(t)x_{0}},
\end{equation}
in terms of which
\begin{align}
\eta(t) & =\left[\frac{\dot{b}(t)}{b(t)}-\frac{\dot{\gamma}(t)}{\gamma(t)}\right]\frac{b^{2}(t)}{\omega_{0}},\label{eq:eta-gamma}\\
\mathcal{N}_{23}(t) & =\frac{N(N^{2}-1)b^{2}(t)}{24\gamma^{2}(t)}.
\end{align}
One can also solve $\gamma(t)$ in terms of $\eta(t)$ from Eq.~(\ref{eq:eta-gamma}),
\begin{equation}
\gamma(t)=\gamma_{0}b(t)e^{-\omega_{0}\int_{0}^{t}\frac{\eta(\tau)}{b^{2}(\tau)}d\tau},\label{eq:gamma-exp}
\end{equation}
 where $\gamma_{0}\equiv\gamma(0)$. The complex-valued TDJA~(\ref{eq:Psi-TDJastrow})
becomes 
\begin{eqnarray}
\Psi(\bm{x},\,t) & = & \frac{1}{b^{N/2}(t)e^{N(N-1)b^{2}(t)/[24\gamma^{2}(t)]}}\prod_{i<j}\bigg|\sinh\left[\frac{x_{ij}}{\gamma(t)x_{0}}\right]\bigg|^{[1+\text{i}\eta(t)]/2}\nonumber \\
 &  & \times\prod_{k}e^{-\frac{x_{k}^{2}}{2x_{0}^{2}b^{2}(t)}+\text{i}\frac{\dot{b}(t)x_{k}^{2}}{2\omega_{0}x_{0}^{2}b(t)}+\text{i}\tau(t)}.\label{eq:Psi-Hyperbolic-B}
\end{eqnarray}
The corresponding parent Hamiltonian is 
\begin{align}
\hat{\mathscr{H}}(t) & =\frac{1}{2m}\sum_{i}p_{i}^{2}+\frac{1}{2}m\Omega^{2}(t)\sum_{i}x_{i}^{2}-\frac{\hbar^{2}}{m}\sum_{i<j}\frac{[\eta^{2}(t)+1]c^{2}(t)}{4\sinh^{2}[c(t)x]}\nonumber \\
 & -\frac{1}{2}\hbar\omega(t)c(t)\sum_{i<j}\coth[c(t)x_{ij}]x_{ij}+\bar{\mathscr{E}}(t)\nonumber \\
 & -\frac{1}{2}\hbar\dot{\eta}(t)\sum_{i<j}\ln|\sinh[c(t)x_{ij}]|,\label{eq:H-Hyperbolic-B}
\end{align}
where $\Omega(t)$ is given by the Ermakov equation~(\ref{eq:Ermakov}),
\begin{equation}
\bar{\mathscr{E}}(t)=\mathscr{E}(t)+\frac{N(N-1)\hbar^{2}[1-\eta^{2}(t)]c^{2}(t)}{8m},
\end{equation}
and we have used the identity $\varpi(t)c(t)+\eta(t)\dot{c}(t)=\omega(t)c(t)$
with $\varpi(t)$ defined in Eq.~(\ref{eq:omgbar-def}). We note
that for small value $c(t)$ case A) and B) reduce to the corresponding
cases in the CS and logarithmic CS models.

For non-vanishing $\eta(t)$, as is clear from Eq.~(\ref{eq:eta-gamma}),
$\gamma(t)\neq b(t)$, and the dynamics given by (\ref{eq:Psi-Hyperbolic-B})
is no longer scale-invariant. Finally, we note that in this case the
prime parent Hamiltonian or the reference Hamiltonian $\mathscr{H}_{0}^{\prime}(t)$
is 
\begin{align}
\hat{\mathscr{H}}_{0}^{\prime}(t) & =\frac{1}{2m}\sum_{i}p_{i}^{2}+\frac{1}{2}m\omega^{2}(t)\sum_{i}x_{i}^{2}-\frac{\hbar^{2}}{m}\sum_{i<j}\frac{c^{2}(t)}{4\sinh^{2}[c(t)x_{ij}]}\nonumber \\
 & -\frac{1}{2}\hbar\omega(t)c(t)\sum_{i<j}\coth[c(t)x_{ij}]x_{ij}+\bar{\mathscr{E}}_{0}^{\prime}(t),\label{eq:H0prime-Hyperbolic-B}
\end{align}
where $\bar{\mathscr{E}}_{0}^{\prime}(t)$ is the same as Eq.~(\ref{eq:scrE0bar-Hyperbolic-A})
except now $c(t)$ is expressed by Eq.~(\ref{eq:c-Log-Hyperbolic})
and $\lambda_{0}$ is set to be $1/2$. The auxiliary control field  involves one-and two-body terms
\begin{equation}
\hat{\mathscr{H}}_{1}^{\prime}(t)\equiv\frac{\dot{b}(t)}{2b(t)}\sum_{i}\{x_{i},\,p_{i}\}+\frac{\hbar\eta(t)c(t)}{4m}\sum_{i<j}\left\{ \coth[c(t)x_{ij}],\,p_{ij}\right\}.
\end{equation}

When computing the unnormalized state overlap $|\braket{\Psi(t)\big|\Psi(t_{0})}|$,
as with Eq.~(\ref{eq:SA-zvar}), one can make the  change
of variables $\alpha(t,t_{0})x_{ij}=z_{ij}$. Similarly to the case in  Sec.~\ref{subsec:The-hyperbolic-model},
assuming Eq.~(\ref{eq:b-property}) and $\gamma(t)\to\infty$ as
$t\to\infty$ and performing a Taylor expansion of  $\sinh\left[\frac{|z_{ij}/\alpha(t,t_{0})|}{\gamma(t)x_{0}}\right]$,
one can find the long-time asymptotic upper bound,
\begin{equation}
|\braket{\Psi(t)\big|\Psi(t_{0})}|\lesssim[b(t)]^{-N/2}[\gamma(t)]^{-N(N-1)/4}e^{-N(N-1)b^{2}(t)/[24\gamma^{2}(t)]},
\end{equation}
where $\lesssim$ indicates the time scaling of the l.h.s. is smaller
or equal to that of the r.h.s.

As clear from Eq.~(\ref{eq:gamma-exp}), the behavior $\gamma(t)\to\infty$ as
$t\to\infty$ can be guaranteed if 
\begin{equation}
\int_{0}^{\infty}\frac{\eta(\tau)}{b^{2}(\tau)}d\tau\ll\omega_{0}^{-1}.
\end{equation}
In this case $\gamma(t)\to\gamma_{0}b(t)$ as $t\to\infty$. At late
times, the survival probability of the logarithmic CS model is thus bounded
by
\begin{align}
\text{Pr}_{s}(t,\,t_{0}) & \lesssim[b(t)]^{-N[1+(N-1)/2]}.
\end{align}

\section{\label{sec:TD-long-range-LL}The time-dependent long-range Lieb-Liniger model}

The conventional LL model describes ultracold bosons in one spatial
dimension subject to contact interactions \citep{LL63,L63}, e.g.,
describing $s$-wave scattering in ultracold atoms \citep{Olshanii98}.
The generalization of the LL model with additional one-dimensional
Coulomb or gravitational interactions was introduced in Ref. \citep{BeauPittman20,delcampo20}.
A study of its ground-state properties reveals a rich phase diagram
including a trapped McGuire quantum soliton in the attractive case,
as well as an incompressible Laughlin-like fluid and Wigner-crystal
behavior in the repulsive case. The model has also been generalized
to higher spatial dimensions \citep{BeauDC21} and by changing the
attractive and repulsive character of the interactions \citep{Beau2021dark}.

We next apply our results to discuss its time dependence exactly.
To this end, we take\textit{ 
\begin{align}
\Gamma_{ij}(t)=c(t)|x_{ij}|, & \,\Gamma_{ij}^{\prime}(t)=c(t)\text{{\rm sgn}}(x_{ij}),\\
\Gamma_{ij}^{\prime\prime}(t)=2c(t)\delta_{ij},\, & \dot{\Gamma}_{ij}(t)=\dot{c}(t)|x_{ij}|.
\end{align}
}Then, the only solution consistent with Eq.~(\ref{eq:2-body-consist-main})
involves $\eta(t)=0$, $\dot{\widetilde{\mathcal{N}}}_{2}(t)=0$ and 
\begin{equation}
c(t)=c_{0}\sqrt{\frac{\omega(t)}{\omega_{0}}}=\frac{c_{0}}{b(t)}.
\end{equation}
The function $\Gamma_{ij}(t)$ for case (i) satisfies the property of Eq.~(\ref{eq:vanishing-eta-Gamma}).
Thus, the complex-valued TDJA is 
\begin{equation}
\Psi(\bm{x},\,t)=\frac{1}{b^{N/2}(t)}\prod_{i<j}e^{c_{0}|x_{ij}/b(t)|}e^{-\frac{\sum_{k}x_{k}^{2}}{2x_{0}^{2}b^{2}(t)}+\text{i}\frac{\dot{b}(t)\sum_{k}x_{k}^{2}}{2\omega_{0}x_{0}^{2}b(t)}+\text{i}N\tau(t)}.\label{eq:Psi-LL}
\end{equation}
Note that as long as $\omega(t)$ is strictly positive, the sign of
$c_{0}$ can be arbitrary. In the case where $\omega(t)=0$, $c(t)$
must be negative so that $\Psi$ is still normalizable. According
to Eq.~(\ref{eq:TDPHJ}), the corresponding parent Hamiltonian reduces
to 
\begin{align}
\hat{\mathscr{H}}(t) & =\frac{1}{2m}\sum_{i}p_{i}^{2}+\frac{1}{2}m\Omega^{2}(t)\sum_{i}x_{i}^{2}+2c(t)\sum_{i<j}\delta_{ij}\nonumber \\
 & -\hbar\omega(t)c(t)\sum_{i<j}|x_{ij}|+\bar{\mathscr{E}}(t),\label{eq:PHJ-TDCV-LL}
\end{align}
where $\Omega(t)$ is given by the Ermakov equation~(\ref{eq:Ermakov})
and 
\begin{equation}
\bar{\mathscr{E}}(t)=\mathscr{E}(t)+\frac{N(N-1)\hbar^{2}c^{2}(t)}{2m}.
\end{equation}

A few comments are in order. The time-independent
version of Eq.~(\ref{eq:PHJ-TDCV-LL}) was found previously~\citep{delcampo20}
in the context of the real-valued TIJA, and is the long-range LL model,
where the long-range interaction is the Coulomb repulsion in one dimension
\citep{BeauPittman20}. Further, it is worth noting that in the limit
$c_{0}\to\infty$ the long-range interaction term does not vanish
and it does not reduce to the scale-invariant dynamics for Tonks-Girardeau
gas~\citep{minguzzi2005exactcoherent}. Instead, it can be recovered
by taking limit $\lambda_{0}\to1$ in the Calogero model discussed
above (see the Supplemental Material of Ref.~\citep{yang2022one}).
Finally, the dynamics is, in this case, still scale-invariant as one
can see from Eq.~(\ref{eq:PHJ-TDCV-LL}), but the interaction in
the parent Hamiltonian no longer has the scaling property.

When applying to STA, the reference Hamiltonian has the same form
as Eq.~(\ref{eq:PHJ-TDCV-LL}) with $\Omega(t)$ replaced by $\omega(t)$
and $\bar{\mathscr{E}}(t)$ replaced by $\bar{\mathscr{E}}_{0}^{\prime}(t)$,
where 
\begin{equation}
\bar{\mathscr{E}}_{0}^{\prime}(t)=\mathscr{E}_{0}^{\prime}(t)+\frac{N(N-1)\hbar^{2}c^{2}(t)}{2m}.
\end{equation}
The auxiliary control field is given by Eq.~(\ref{eq:control-CSA}) and is
thus the same as the one for the CS model.

For a scaling factor $b(t)$ that satisfies Eq.~(\ref{eq:b-property}),
it is straightforward to see that as $t\to\infty$, to the leading
order,
\begin{equation}
\braket{\Psi(t)\big|\Psi(t_{0})}\sim\frac{\int_{\mathscr{C}^N}d^Nze^{\frac{c_{0}}{b(t_{0})}\big|\frac{z_{ij}}{\alpha_{\infty}(t_{0})}\big|-\sum_{k}z_{k}^{2}/x_{0}^{2}}}{[b(t)b(t_{0})]^{N/2}|\alpha_{\infty}^{N}(t_{0})|}.
\end{equation}
We conclude that for a fixed number of particles $N$, the asymptotic
time-dependence of $|\braket{\Psi(t)\big|\Psi(t_{0})}|$ is given
by $1/[b(t)]^{N/2}$ so that 
\begin{equation}
\text{Pr}_{s}(t,\,t_{0})\propto\frac{1}{[b(t)]^{N}},\,\text{as }t\to\infty.\label{eq:SP-LL}
\end{equation}
We shall check this prediction against numerical calculations in the next
section for specific choices of $b(t)$.

\textcolor{blue}{This result describes the asymptotic decay of an initially trapped bright many-body quantum soliton in a one-dimensional expansion. It is noteworthy that the long-time behavior differs from the case when the initial state is subject to hard-core repulsive interactions when the decay is set by $\text{Pr}_{s}(t,\,t_{0})\propto[b(t)]^{-N^2}$ \cite{delcampo11,delcampo16}.}

\section{\label{sec:Effective-dynamics}Effective dynamics and the one-body
reduced density matrix (OBRDM)}

As we have discussed in Sec.~\ref{sec:Imposing-Hermicitiy}, the
Ermakov equation~(\ref{eq:Ermakov}) is a second-order nonlinear
differential equation and we only impose $b(0)=1$, while leaving $\dot{b}(0)$
unspecified. If we further impose the initial condition 
\begin{equation}
\dot{b}(0)=0,\,\eta(0)=0,\,\dot{\eta}(0)=0,\label{eq:additional-IC}
\end{equation}
we find according to Eqs.~(\ref{eq:scr-U-harmonic}) and (\ref{eq:H1prime-def}) that
\begin{equation}
\mathscr{U}(0)=e^{{\rm i}N\tau(0)},\,\Psi(0)=\Phi(0)e^{{\rm i}N\tau(0)},\,\mathscr{H}_{1}^{\prime}(0)=0.
\end{equation}
If additionally one chooses
\begin{equation}
\Omega_{0}=\omega_{0},\label{eq:same-initial-frequency}
\end{equation}
so that $\Omega(t)$ and $\omega(t)$ coincide at $t=0$, it follows from
 Eqs.~(\ref{eq:TDPHJ}) and (\ref{eq:H0prime-CD}) that
\begin{equation}
\hat{\mathscr{H}}(0)=\mathscr{\hat{H}}_{0}^{\prime}(0).
\end{equation}
Since $\hat{\mathscr{H}}_{0}^{\prime}(0)\Phi(0)=0$, Eqs.~(\ref{eq:additional-IC}) and (\ref{eq:same-initial-frequency})
would imply $\hat{\mathscr{H}}(0)\Psi(0)=0$. That is, the complex-valued
TDJA not only satisfies the time-dependent Schr\"odinger equation, but
its projection at time $t=0$ is also an eigenstate of $\hat{\mathscr{H}}(0)$.

However, we should keep in mind that the conditions specified in Eqs.~(\ref{eq:additional-IC}) 
and (\ref{eq:same-initial-frequency}) correspond only to one of the possible
choices. In fact, by changing these conditions, one can account for
the quench dynamics resulting from fast changes of $b(t)$ or $\eta(t)$
as we will discuss in the next section.

\textcolor{blue}{Furthermore, for scale-invariant dynamics that appears
in the previous sections, 
given (\ref{eq:additional-IC}), the TDJA can be universally represented
as follows: 
\begin{equation}
\Psi(\bm{x},\,t)=\mathscr{U}_{P}(\bm{x},\,t)\mathscr{U}_{S}(\bm{x},\,t)\Psi(\bm{x},\,t=0),\label{eq:Scale-invariant-Universal}
\end{equation}
where the scale transformation is defined as 
\begin{equation}
\mathscr{U}_{S}(\bm{x},\,t)\equiv e^{-\text{i}\ln b(t)\sum_{k}[x_{k}p_{k}+p_{k}x_{k}]/(2\hbar)}.
\end{equation}
A few comments are in order. (i) Without assuming Eq. ~(\ref{eq:additional-IC}),
a similar representation is also possible, but $\mathscr{U}_{P}(\bm{x},\,t)$
and $\mathscr{U}_{S}(\bm{x},\,t)$ need to be adapted to
account for the non-vanishing value of $\dot{b}(t)$ and $\eta(t)$
at the initial time. (ii) The product of unitaries $\mathscr{U}_{P}(\bm{x},\,t)\mathscr{U}_{S}(\bm{x},\,t)$
may generically differ from the full unitary evolution $\mathcal{T}\exp\left[\int_{0}^{t}\hat{\mathscr{H}}(\tau)d\tau\right]$,
where $\mathcal{T}$ denotes time ordering. In other words, $\mathscr{U}_{P}(\bm{x},\,t)\mathscr{U}_{S}(\bm{x},\,t)$
may be viewed as describing an }\textit{\textcolor{blue}{effective}}\textcolor{blue}{{}
quantum many-body dynamics when the full unitary evolution is applied
on the TDJA at time $t=0$. While for the CS model discussed in
Sec.~\ref{subsec:CS-Scale-Invariant}, they are the same~\citep{delcampo2013shortcuts,gritsev2010scaling},
their relations in other models are left for future studies. }

\textcolor{blue}{
Consider next the study of local observables, which can be computed when the 
the one-body reduced density matrix
(OBRDM) is known. 
The OBRDM is defined as 
\begin{equation}
\rho_{1}(x,\,x^{\prime},\,t)=\frac{N}{\mathcal{C}}\int_{\mathbb{R}^N}\Psi(x,\,\bar{\bm{x}},\,t)\Psi^{*}(x^{\prime},\,\bar{\bm{x}},\,t)d^{N-1}\bar{x},
\end{equation}
where $\bar{\bm{x}}\equiv(x_{2},\,\cdots,\,x_{N})$, $d^{N-1}\bar{x}=\prod_{m=2}^Ndx_m$, and $\mathcal{C}$
is defined in Eq.~(\ref{eq:calC-def}). 
To shed light on its evolution, one can use Eq.~(\ref{eq:Scale-invariant-Universal}).
We note that $\mathscr{U}_{S}(\bm{x},\,t)$
only involves one-body operation while this is only true for  $\mathscr{U}_{P}(\bm{x},\,t)$
 when $\eta(t)=0$. As  discussed in Sec.~\ref{sec:vanishing-eta},
$\Gamma_{ij}(t)=\Gamma(c_{0}x_{ij}/b(t))$, }so that \begin{widetext}\textcolor{blue}{
\begin{equation}
\rho_{1}(x,\,x^{\prime},\,t)=\frac{N}{\mathcal{C}b^{N}(t)}e^{-\frac{x^{2}+x^{\prime2}}{2x_{0}^{2}b^{2}(t)}+\text{i}\frac{(x^{2}-x^{\prime2})\dot{b}(t)}{2x_{0}^{2}\omega_{0}b(t)}}\int_{\mathbb{R}^N}d^{N-1}\bar{x}e^{2\sum_{i<j,\,i,j\ge2}\Gamma\left(\frac{c_{0}x_{ij}}{b(t)}\right)+\sum_{j\ge2}\left[\Gamma\left(\frac{c_{0}(x-x_{j})}{b(t)}\right)+\Gamma\left(\frac{c_{0}(x^{\prime}-x_{j})}{b(t)}\right)\right]-\frac{\sum_{k\ge2}x_{k}^{2}}{x_{0}^{2}b^{2}(t)}}.\label{eq:rho1-explicit}
\end{equation}
Making change of variables $x/b(t)\to y$, $x^{\prime}/b(t)\to y^{\prime}$and
$x_{k}/b(t)\to y_{k}$ for fixed time $t$, we find 
\begin{equation}
\rho_{1}\left(y(t),\,y^{\prime}(t),\,t\right)=\frac{N}{\mathcal{C}b(t)}e^{-\frac{y^{2}(t)+y^{\prime2}(t)}{2x_{0}^{2}}+\text{i}\frac{[y^{2}(t)-y^{\prime2}(t)]\dot{b}(t)b(t)}{2x_{0}^{2}\omega_{0}}}\int_{\mathbb{R}^N}d^{N-1}\bar{y}e^{2\sum_{i<j,\,i,j\ge2}\Gamma\left(c_{0}y_{ij}\right)+\sum_{j\ge2}\left[\Gamma\left(c_{0}(y(t)-y_{j})\right)+\Gamma\left(c_{0}(y^{\prime}(t)-y_{j})\right)\right]-\frac{\sum_{k\ge2}y_{k}^{2}}{x_{0}^{2}}}.\label{eq:rho1-explicit-1}
\end{equation}
The numerical calculation of the OBRDM and the density file $\rho(x,\,t)\equiv\rho_{1}(x,\,x^{\prime},\,t)$
only involves real multi-dimensional integrals. On the other hand,
$\mathcal{C}$, the normalization of the TDJA, can be efficiently
computed numerically as we have argued in Sec.~\ref{sec:vanishing-eta}.
Thus, both the OBRDM and the density profile can be computed efficiently
with the Monte Carlo algorithms. The momentum distribution defined
as $n(p,\,t)\equiv\frac{1}{2\pi\hbar}\int dxdx^{\prime}e^{-\text{i}p(x-x^{\prime})}\rho_{1}(x,\,x^{\prime},\,t)$
can be also calculated with Eq.~(\ref{eq:rho1-explicit}). }

\end{widetext}

 \textcolor{blue}{Moreover, one can show analytically the OBRDM also
follows the unitary dynamics in this case since with Eq.~(\ref{eq:Scale-invariant-Universal})
the OBRDM can be easily calculated: 
\begin{align}
\rho_{1}(x,\,x^{\prime},t) & =\mathscr{U}_{P}(x,\,t)\mathscr{U}_{S}(x,\,t)\rho_{1}(x,\,x^{\prime},t=0)\mathscr{U}_{S}^{\dagger}(x^{\prime},\,t)\mathscr{U}_{P}^{\dagger}(x^{\prime},\,t),\nonumber \\
 & =\frac{1}{b(t)}\exp\left[\text{i}\frac{\dot{b}(t)(x^{2}-x^{\prime2})}{2\omega_{0}x_{0}^{2}b(t)}\right]\rho_{1}\left(\frac{x}{b(t)},\,\frac{x^{\prime}}{b(t)},t=0\right).\label{eq:OBRDM-dyn}
\end{align}
}

\textcolor{blue}{This relation has been reported in Refs.~\citep{minguzzi2005exactcoherent,Dupays22}
for the scale-invariant dynamics of Tonks-Girardeau gases in the context
of dynamical fermionization and its control. Starting from Eq.~(\ref{eq:OBRDM-dyn})
and based on an analysis of the Wigner function, Ref.~\citep{Dupays22}
found the relations between the initial density distributions and
the later time density distribution, as well as the momentum distribution. }

\textcolor{blue}{Here, we extend Eq.~(\ref{eq:OBRDM-dyn}) to a variety
of models, including the family of CS models, the hyperbolic models
and the long-range LL model, where the exact dynamics given by the
TDJA is scale-invariant with $\eta(t)=0$. Therefore, as in Refs.~\citep{minguzzi2005exactcoherent,Dupays22},
one can conclude that}

\textcolor{blue}{
\begin{equation}
\rho(x,t)=\frac{1}{b(t)}\rho\left(\frac{x}{b(t)},\,0\right),
\end{equation}
holds for all times, where $\rho(x,\,t)\equiv\rho_{1}(x,x,t)$. At
late times where $t\to\infty$, when $b(t)$ satisfies Eq.~(\ref{eq:b-property}),
one can show that 
\begin{equation}
n(p,t)\approx\frac{1}{m\dot{b}(t)}\rho\left(\frac{p}{m\dot{b}(t)},\,0\right).
\end{equation}
}

\textcolor{blue}{We conclude this section by noting that for the logarithmic
CS model, even if the dynamics is scale-invariant, $\mathscr{U}_{P}(\bm{x},\,t)$
involves two-body operations, and thus, the time evolution of the OBRDM in this model is no
longer unitary.}

\section{\label{sec:Quench-dynamics}Quench dynamics}

Although in previous sections, we have investigated the parent Hamiltonians
corresponding to the TDJA, it is worth discussing the quench limit
where some parameter(s) in the Hamiltonian change from one value to
another suddenly. For the sake of illustration, we consider interaction quenches in the logarithmic CS model and the LL model by specifying the functional
forms of $b(t)$ or $\eta(t)$. A similar analysis is also applicable
to the hyperbolic model and the logarithmic hyperbolic model.

\subsection{Interaction quench of the logarithmic Calogero-Sutherland gases}

The TDJA leads to the scaling-invariant dynamics for the standard
CS model,  discussed in Sec.~\ref{sec:TD-CS}, a feature
extensively used in the previous literature~\citep{Sutherland98,DaeYupSong01,DaeYupSong02,delcampo16}.
Quenches in this model of the trap frequency have been discussed in
Ref.~\citep{Sutherland98,delcampo16}. Here, we focus on the time-dependent
logarithmic CS model~(\ref{eq:TDPHJ-CS}), where the interaction
between the particles is quenched from one value to another.
We assume $\Omega(t)=\Omega_{0}=\omega_{0}$, implying $b(t)=1$.
We take 
\begin{equation}
\eta(t)=\eta_{0}\Theta(-t),
\end{equation}
and Eq.~(\ref{eq:TDPHJ-CS}) becomes 
\begin{align}
\mathscr{H}(t) & =\frac{1}{2m}\sum_{i}p_{i}^{2}+\frac{1}{2}m\Omega_{0}^{2}\sum_{i}x_{i}^{2}-\frac{\hbar^{2}}{4m}\sum_{i<j}\frac{1}{x_{ij}^{2}}\nonumber \\
 & +\begin{cases}
-\frac{\hbar^{2}}{4m}\sum_{i<j}\frac{\eta_{0}^{2}}{x_{ij}^{2}} & t=0^{-}\\
\delta(t)\frac{\hbar\eta_{0}}{2}\sum_{i<j}\ln|x_{ij}| & t=0\\
0 & t=0^{+}
\end{cases},
\end{align}
where we note that sudden quench in $\eta(t)$ is now supplemented
by the delta-kick at $t=0$. Under such a quench, the complex-valued
TDJA ~(\ref{eq:Psi-LogCS}) reads 
\begin{equation}
\Psi(t)=\prod_{i<j}|x_{ij}|^{[1+\text{i}\eta_{0}\Theta(-t)]/2}e^{-\frac{m\Omega_{0}}{2\hbar}\sum_{k}x_{k}^{2}+\text{i}N\tau(t)},
\end{equation}
where $\tau(t)$ is determined by imposing $\bar{\mathscr{E}}(t)=0$
and is found to be $\tau(t)=-\frac{1}{4}(N+1)\Omega_{0}t$. We see
that there is a phase slip in the two-body wave function.

\begin{table*}
\begin{centering}
\begin{tabular}{c||cc||c}
\hline 
 & \multirow{1}{*}{Dynamics} & \multirow{1}{*}{Parent Hamiltonian} & \multirow{1}{*}{Constraints}\tabularnewline
 &  &  & \tabularnewline
\hline 
\hline 
\multirow{2}{*}{Generic model} & \multirow{2}{*}{$\Psi(t)=\frac{1}{b^{N/2}(t)e^{\mathcal{N}_{23}(t)}}\prod_{i<j}e^{\Gamma_{ij}(t)[1+\text{i}\eta(t)]}\prod_{k}e^{-\frac{x_{k}^{2}}{2x_{0}^{2}b^{2}(t)}+\text{i}\frac{\dot{b}(t)x_{k}^{2}}{2\omega x_{0}^{2}b(t)}+\text{i}\tau(t)}$} & \multirow{2}{*}{Eq.~(\ref{eq:TDPHJ})} & \multirow{2}{*}{Eq.~(\ref{eq:2-body-consist-main})}\tabularnewline
 &  &  & \tabularnewline
\hline 
\multirow{2}{*}{STA} & \multirow{2}{*}{$\Phi(t)=\frac{1}{b^{N/2}(t)e^{\mathcal{N}_{23}(t)}}\prod_{i<j}e^{\Gamma_{ij}(t)}\prod_{k}e^{-\frac{x_{k}^{2}}{2x_{0}^{2}b^{2}(t)}}$} & \multirow{2}{*}{Eq.~(\ref{eq:Hprime-CD})} & Eq.~(\ref{eq:2-body-consist-main})\tabularnewline
 &  &  & $\mathscr{H}_{0}^{\prime}(t)\Phi=0$\tabularnewline
\hline 
\multirow{2}{*}{$\eta(t)=0$} & \multirow{2}{*}{$\Gamma_{ij}(t)=\Gamma\left(\frac{c_{0}x_{ij}}{b(t)}\right)$, $\Gamma$
is a generic even function} & \multirow{2}{*}{Eq.~(\ref{eq:H-vanishing-eta})} & \tabularnewline
 &  &  & \tabularnewline
\cline{1-3} \cline{2-3} \cline{3-3} 
\multirow{2}{*}{Time-dependent scale-invariant CS} & \multirow{2}{*}{$\Gamma_{ij}(t)=\lambda_{0}\ln|x_{ij}|$, $\eta(t)=0$} & \multirow{2}{*}{Eq.~(\ref{eq:TDPHJ-CS-regular})} & \tabularnewline
 &  &  & \tabularnewline
\cline{1-3} \cline{2-3} \cline{3-3} 
\multirow{2}{*}{Logarithmic CS} & \multirow{2}{*}{$\Gamma_{ij}(t)=\ln|x_{ij}|/2$, arbitrary $\eta(t)$} & \multirow{2}{*}{Eq.~(\ref{eq:TDPHJ-CS})} & Eq.~(\ref{eq:2-body-consist-main})\tabularnewline
 &  &  & \tabularnewline
\cline{1-3} \cline{2-3} \cline{3-3} 
\multirow{2}{*}{Time-dependent hyperbolic} & \multirow{2}{*}{$\Gamma_{ij}(t)=\lambda_{0}\ln|\sinh[c(t)x_{ij}]|$, $\eta(t)=0$} & \multirow{2}{*}{Eq.~(\ref{eq:H-Hyperbolic-A})} & is satisfied\tabularnewline
 &  &  & \tabularnewline
\cline{1-3} \cline{2-3} \cline{3-3} 
\multirow{2}{*}{Non scale-invariant logarithmic hyperbolic} & \multirow{2}{*}{$\Gamma_{ij}(t)=\ln|\sinh[c(t)x_{ij}]|/2$, arbitrary $\eta(t)$} & \multirow{2}{*}{Eq.~(\ref{eq:H-Hyperbolic-B})} & \tabularnewline
 &  &  & \tabularnewline
\cline{1-3} \cline{2-3} \cline{3-3} 
\multirow{2}{*}{Long-range LL} & \multirow{2}{*}{$\Gamma_{ij}(t)=\exp[c(t)|x_{ij}|]$, $\eta(t)=0$} & \multirow{2}{*}{Eq.~(\ref{eq:PHJ-TDCV-LL})} & \tabularnewline
 &  &  & \tabularnewline
\hline 
\end{tabular}
\par\end{centering}
\caption{\label{tab:Summary-of-finding}The family of parent Hamiltonians of
a time-dependent Jastrow ansatz, which includes evolutions characterized
by scale invariance as well as instances lacking it, such as the logarithmic
hyperbolic model. At variance with previous results discussed in the
literature \citep{gritsev2010scaling,delcampo2013shortcuts,deffner2014classical},
the class of scale-invariant evolutions is not restricted to Hamiltonians
with scale-invariant interactions, such as the scale-invariant time-dependent
CS model, but accommodates for cases where interactions do not have
the same scaling dimension as the kinetic energy operator.}
\end{table*}


\subsection{Interaction quench of the long-range Lieb-Liniger gases}

As a second example, we consider the quench of interactions in the
LL model. We first make change of variables $\beta(t)=1/b(t)$, Eq.~(\ref{eq:PHJ-TDCV-LL})
becomes 
\begin{align}
\mathscr{H}(t) & =\frac{1}{2m}\sum_{i}p_{i}^{2}+\frac{1}{2}m\Omega^{2}(t)\sum_{i}x_{i}^{2}+2c_{0}\beta(t)\sum_{i<j}\delta_{ij}\nonumber \\
 & -\hbar\omega_{0}c_{0}\beta^{3}(t)\sum_{i<j}|x_{ij}|+\bar{\mathscr{E}}(t),
\end{align}
and the Ermakov equation~(\ref{eq:Ermakov}) becomes 
\begin{equation}
\Omega^{2}(t)=\omega_{0}^{2}\beta^{4}(t)-\frac{2\dot{\beta}^{2}(t)}{\beta^{2}(t)}+\frac{\ddot{\beta}(t)}{\beta(t)}.
\end{equation}
Consider the case in which  the interaction is varied by setting 
\begin{equation}
\beta(t)=\frac{{2}}{1+e^{\kappa t}},
\end{equation}
where $\text{\ensuremath{\kappa}}$ is sufficiently large so that
$\beta(t)$ can be approximated by $\Theta(-t)$ and the factor of
$2$ is to satisfy the condition $\beta(0)=b(0)=1$.

Instead of considering the initial time $t_{0}=0$, we shall take
the initial time $t_{0}<0$ with $|t_{0}|\ll1/\kappa$. Next, imposing
$\bar{\mathscr{E}}(t)=0$, one can find the expression for $\tau(t)$.
In particular, for $t\gg\kappa^{-1}$, we find 
\begin{equation}
\tau(t)\approx\tau_{\infty}\equiv\left[2\omega_{0}-\frac{2\hbar(N^{2}-1)c_{0}^{2}}{3m}\right]t_{0},
\end{equation}
where we have used the approximation $\int_{t_{0}}^{t}\frac{1}{(1+e^{\kappa s})^{2}}ds\approx-t_{0}$
for $t\gg\kappa^{-1}$. 
\begin{figure}[t]
\centering \includegraphics[width=0.9\linewidth]{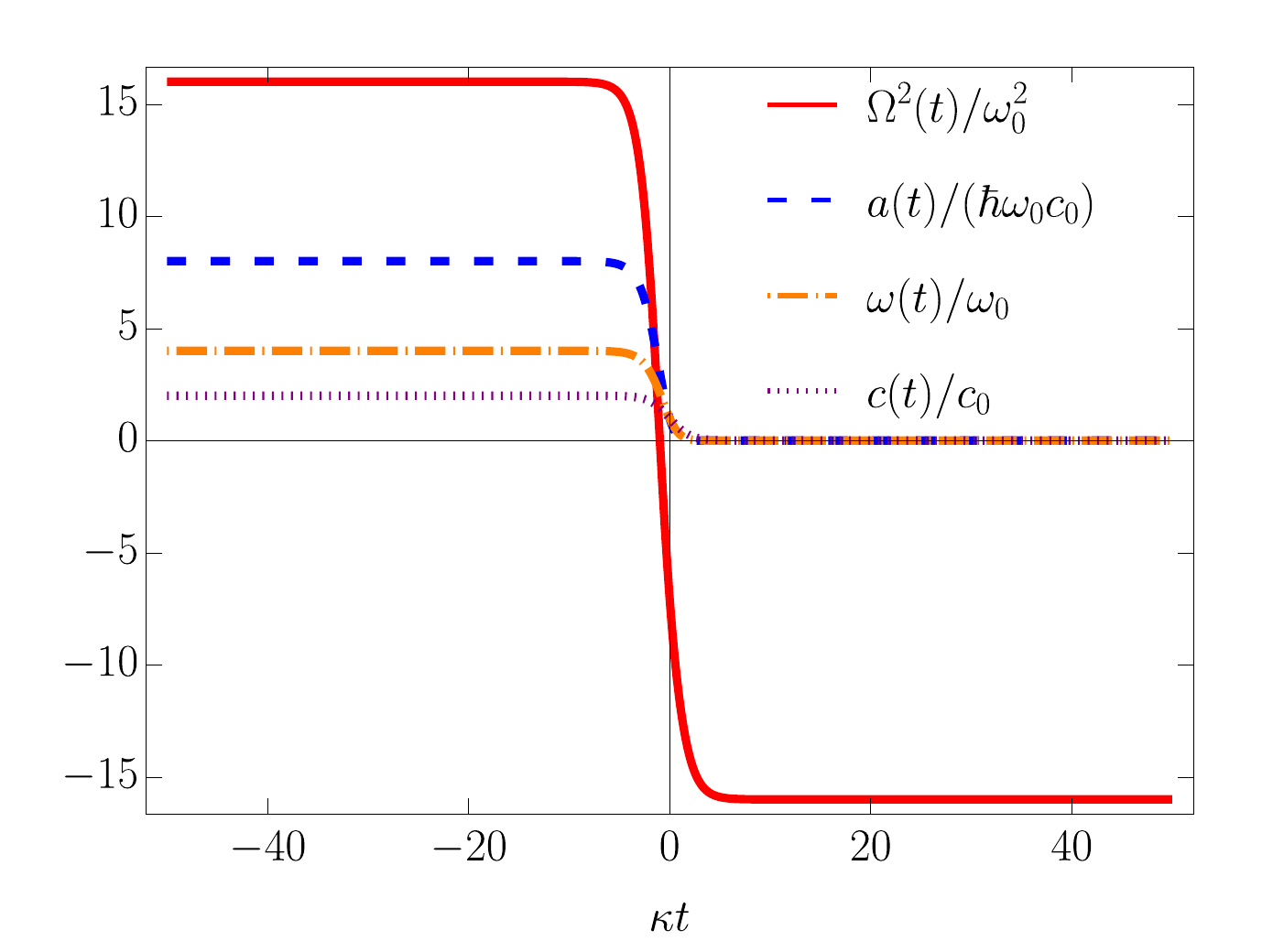} \centering
\caption{Engineered time-dependence of the control parameters as a function
of $\kappa t$. For the plot of $\Omega^{2}(t)/\omega_{0}^{2}$, $\kappa/\omega_{0}=4$.
The initial time is $t_{0}=-50\kappa^{-1}$ and $a(t)$ is the strength
of the long-range Coulomb interactions defined as $a(t)\equiv\hbar\omega_{0}c_{0}\beta^{3}(t)$.
For $\kappa t\gg1$, the time dependence of all these parameters is
well approximated by the Heaviside function and, therefore, leads to
a sudden-quench protocol.}
\label{fig:CtrlPara}
\end{figure}

Therefore, according to Fig.~\ref{fig:CtrlPara} for $\kappa t\gg1$,
we find 
\begin{align}
\mathscr{H}(t) & =\frac{1}{2m}\sum_{i}p_{i}^{2}+\bar{\mathscr{E}}(t)\nonumber \\
 & +\begin{cases}
+8m\omega_{0}^{2}\sum_{i}x_{i}^{2}+4c_{0}\sum_{i<j}\delta_{ij} & t=t_{0}\\
\qquad\quad-8\hbar\omega_{0}c_{0}\sum_{i<j}|x_{ij}|\\
-\frac{1}{2}m\kappa^{2}\sum_{i}x_{i}^{2} & t\gg1/\kappa
\end{cases},
\end{align}
where we have used the approximation that $\kappa$ is large enough
such that $\kappa t\gg1$. The complex-valued TDJA is given by Eq.~(\ref{eq:Psi-LL})
with $b(t)=(1+e^{\kappa t})/2$. In particular, we note that 
\begin{equation}
\Psi(t)=\begin{cases}
2^{N/2}\prod_{i<j}e^{2c_{0}|x_{ij}|}e^{-\frac{2m\omega_{0}}{\hbar}\sum_{k}x_{k}^{2}} & t=t_{0}\\
2^{N/2}e^{-N\kappa t/2}\prod_{i<j}e^{2c_{0}|x_{ij}|/e^{\kappa t}} & t\gg1/\kappa\\
\qquad\quad\times e^{-\frac{2m\omega_{0}}{\hbar}\sum_{k}x_{k}^{2}/e^{2\kappa t}+\text{i}\frac{m\kappa}{2\hbar}\sum_{k}x_{k}^{2}+\text{i}N\tau_{\infty}}
\end{cases},
\end{equation}
for  $\kappa t\gg1$. In particular, we note that
the case of $\omega_{0}=0$ with $c_{0}<0$ corresponds to the dynamics
of quenching the attractive interactions in the LL model from the
McGuire soliton state and then applying an inverted harmonic trap.
\begin{figure}
\begin{centering}
\includegraphics[width=0.95\linewidth]{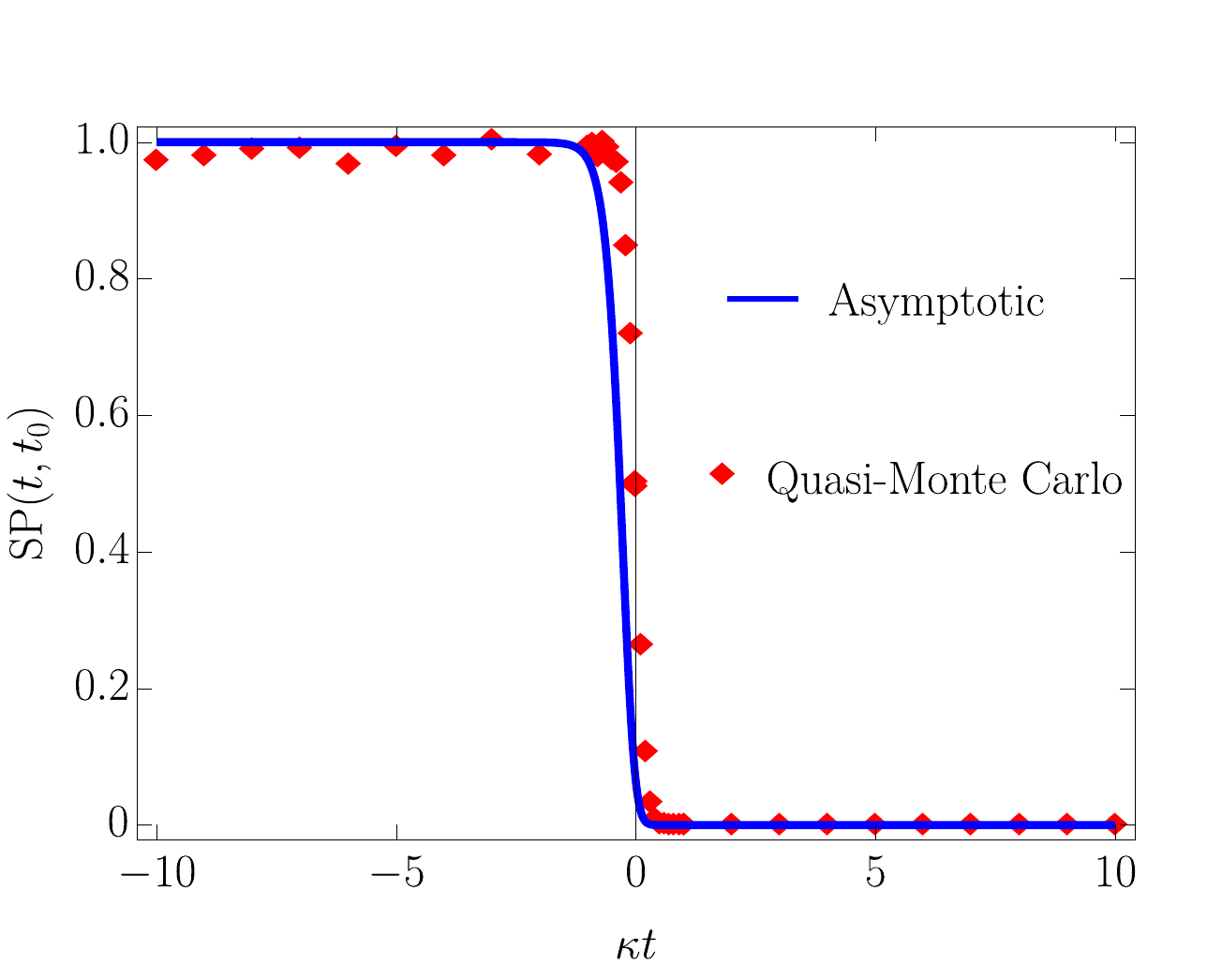}
\par\end{centering}
\caption{\label{fig:SP-LL-InteractionQuench}The numerical calculation versus
the analytic prediction of the survival probability for $4$ particles
for the quench of the interactions in the long-range LL model. The
initial time $t_{0}=-50\kappa^{-1}$. The values of parameters: $\hbar=1$,
$N=4$, $\omega_{0}=1$, $\kappa=5$ and $c_{0}=-1$. For this case,
the survival probability is perfectly predicted by Eq.~(\ref{eq:SP-LL}).}
\end{figure}

\textcolor{blue}{To validate our asymptotic predictions of the survival probabilities, we perform numerical simulations in Mathematica, using the quasi-Monte Carlo algorithm, which utilizes quasi-random numbers \cite{QuasiMonteCarloMathWorld}}. The quasi-Monte Carlo calculation of the survival probability is displayed in Fig.~\ref{fig:SP-LL-InteractionQuench}, in close agreement with the asymptotic
predication of the long-time decay given by Eq.~(\ref{eq:SP-LL}).

\subsection{Sudden release of the long-range Lieb-Liniger gases}

As the last example for the time-dependent long-range LL model~(\ref{eq:PHJ-TDCV-LL}),
we assume the initial time $t_{0}=0$ and impose Eqs.~(\ref{eq:additional-IC}) and (\ref{eq:same-initial-frequency}).
We consider a sudden quench of the trapping frequency familiar in time-of-flight measurements from an  initial frequency $\Omega_{0}$
to a final zero value. In this case, the scaling factor obeys $b(t)=\sqrt{1+\omega_{0}^{2}t^{2}}$.
The time-dependence of all the relevant parameters is shown in Fig.~\ref{fig:CtrlPara-QuenchFreq}.
\begin{figure}[t]
\centering \includegraphics[width=0.85\linewidth]{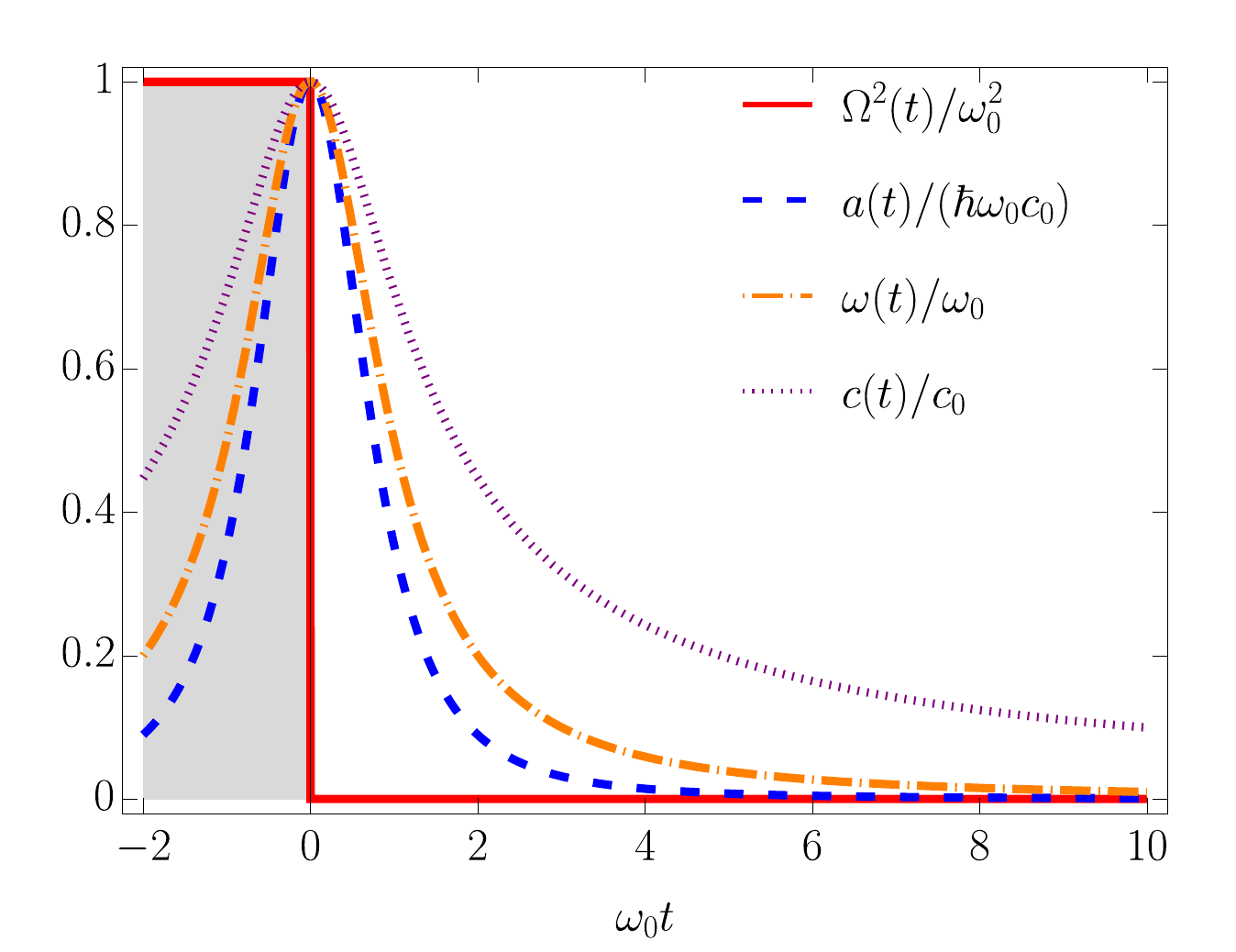} \centering
\caption{Engineered time-dependence of the control parameters as a function
of $\omega_{0}t$. The initial time $t_{0}=0$ and $a(t)$ is the
strength of the long-range Coulomb interactions defined as $a(t)\equiv\hbar\omega_{0}c_{0}/b^{3}(t)$.}
\label{fig:CtrlPara-QuenchFreq}
\end{figure}

Imposing $\bar{\mathscr{E}}(t)=0$, we find 
\begin{align}
\tau(t) & =\left[\frac{\hbar(N^{2}-1)c_{0}^{2}}{6m}-\frac{1}{2}\omega_{0}\right]\int_{0}^{t}\frac{ds}{b^{2}(s)}\nonumber \\
 & =\left[\frac{\hbar(N^{2}-1)c_{0}^{2}}{6m\omega_{0}}-\frac{1}{2}\right]\arctan(\omega_{0}t).
\end{align}
The Hamiltonian ~(\ref{eq:PHJ-TDCV-LL}) becomes 
\begin{align}
\mathscr{H}(t) & =\frac{1}{2m}\sum_{i}p_{i}^{2}+\frac{2c_{0}}{b(t)}\sum_{i<j}\delta_{ij}-\frac{\hbar\omega_{0}c_{0}}{b^{3}(t)}\sum_{i<j}|x_{ij}|\nonumber \\
 & +\begin{cases}
\frac{1}{2}m\omega_{0}^{2}\sum_{i}x_{i}^{2} & t=0^{-}\\
0 & t\ge0
\end{cases},
\end{align}
where TDJA is given by Eq.~(\ref{eq:Psi-LL}). The quasi-Monte Carlo
calculation of survival probability is displayed in Fig.~\ref{fig:SP-LL-TrapQuench},
validating the asymptotics for the long-time decay given by Eq.~(\ref{eq:SP-LL}).

\begin{figure}
\begin{centering}
\includegraphics[width=0.9\linewidth]{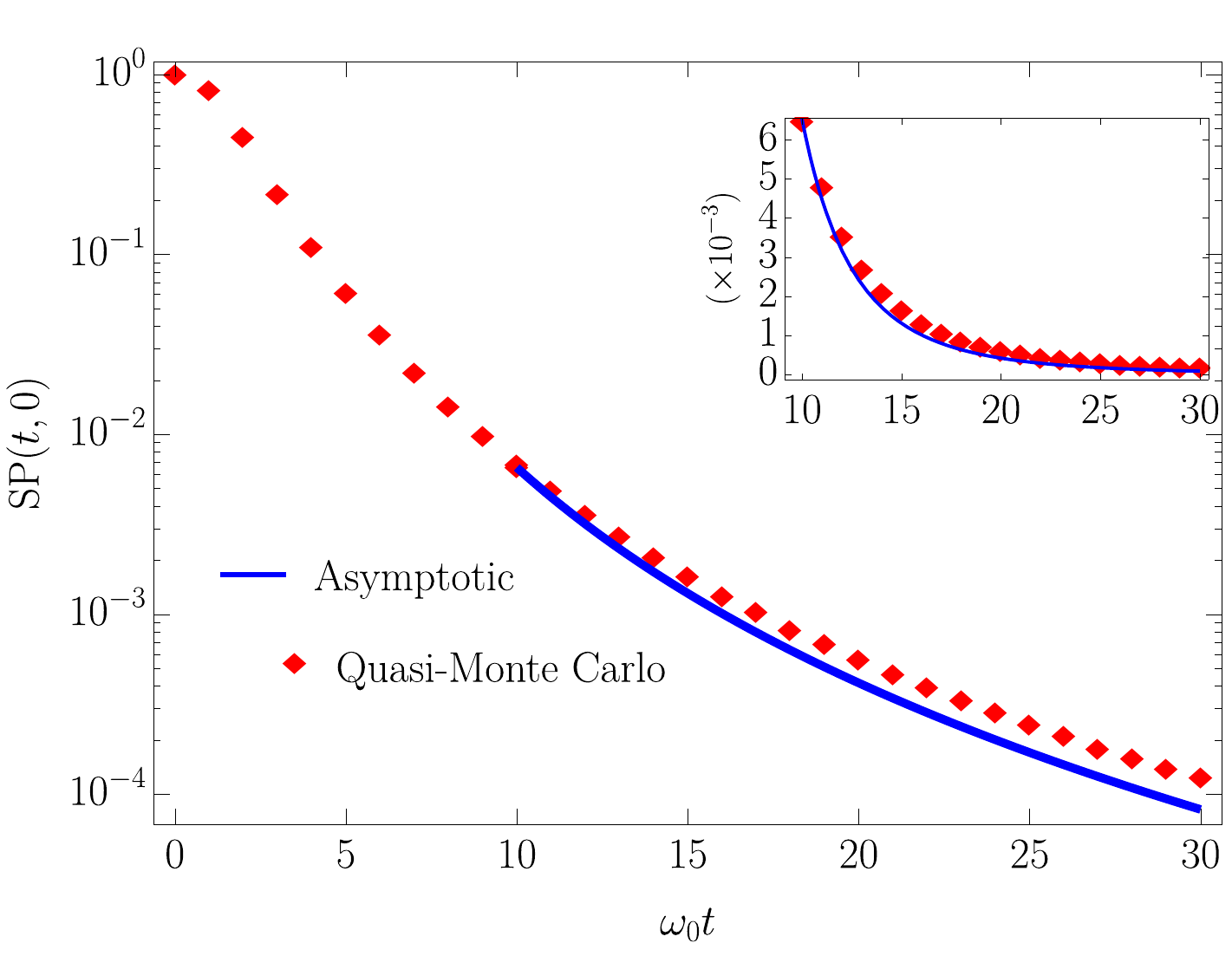}
\par\end{centering}
\caption{\label{fig:SP-LL-TrapQuench}The numerical calculation versus the
analytic prediction of the survival probability after a quench of
the harmonic trap in the long-range LL model. The values of parameters:
$\hbar=1$, $N=4$, $\omega_{0}=1$ and $c_{0}=-1$. The long-time
decay is characterized by asymptotic expression in Eq.~(\ref{eq:SP-LL}).}
\end{figure}



\section{Discussion}

We have shown how finding the parent Hamiltonian of a complex-valued
TDJA leads to the systematic generation of exactly-solvable one-dimensional
strongly-correlated systems away from equilibrium. While we have focused
on spinless bosonic systems, in which the many-body wavefunction is
permutation-symmetric under particle exchange, it is natural to consider
generalizations to fermionic and anyonic systems. In the presence
of hard-core interactions, a natural avenue to do so is via the Bose-Fermi
mapping \citep{Girardeau60} and its generalization to anyonic systems
\citep{Girardeau06}. Such dualities establish a map between bosonic
states and states with other quantum exchange statistics, both in
and out of equilibrium \citep{Girardeau00,minguzzi2005exactcoherent,delcampo08}.
For finite interactions with coupling strength $c$, one can make
use of the generalizations of the Bose-Fermi mapping that hold at
strong coupling, following Gaudin \citep{Gaudin14}. This approach
has already been fruitful in the study of the time-dependent LL gas
using a $1/c$ expansion \citep{Buljan08,Jukic08,Pezer09} and paves
the way to find time-dependent fermionic and anyonic models dual to
the long-range Lieb-Liniger model we have discussed. Yet a different
approach is to rely on the exact mapping uncovered by Bethe-ansatz
between models with different particle statistics \citep{Kundu09,Batchelor06,Batchelor06b,Batchelor08}.

Likewise, further studies can be envisioned by generalizing our results
to higher spatial dimensions \citep{CalogeroMarchioro75,KhareRay97,BeauDC21},
mixtures \citep{GirardeauMinguzzi07,Harshman17}, truncated interactions
\citep{JainKhare99,Pittman17}, and particles with spin or internal
structure \citep{Vacek94,Kawakami94,Girardeau04}.

One may further wonder whether the exact evolutions discussed here
admit an integrable structure analogous to that of time-dependent
quantum Hamiltonians constructed from a zero-curvature representation
in the space of parameters \citep{Sinitsyn18}.

\section{Conclusion }

In summary, we have introduced a family of one-dimensional time-dependent
many-body Hamiltonians whose nonequilibrium strongly-correlated dynamics
is exactly described by the complex-valued TDJA. To guarantee the
Hermicity of the parent Hamiltonian, we derive consistency conditions
for the one-body and two-body functions determining the Jastrow ansatz.
In doing so, the description of the dynamics is remarkably reduced
and boils down to determining the flow of the coupling constants in
these models in terms of simple ordinary differential equations.

We illustrate our findings in four classes of examples, including
the celebrated Calogero-Sutherland, hyperbolic, and Lieb-Liniger models
in the presence of a driven harmonic trap. We show that these results
can be applied to quantum control, and specifically, to the engineering
shortcuts to adiabaticity, leading to protocols  \textcolor{blue}{that steer the adiabatic
evolution of strongly correlated systems described by the real-valued TIJA. The main findings are summarized
in Table~\ref{tab:Summary-of-finding}. For scale-invariant
dynamics with $\eta(t)=0$, we found that one can relate the late-time density  and  momentum distributions to the
initial density distribution, as  was previously established
for Tonks-Girardeau gases~\citep{minguzzi2005exactcoherent,Dupays22}.} Finally, we show that
our results allow us to describe the exact nonequilibrium dynamics
following a quench of the interactions, as discussed in the generalized
Calogero-Sutherland models with logarithmic interactions and the long-range
Lieb-Liniger model describing bosons subject to contact and Coulomb
interactions.

Our findings provide valuable analytic solutions for the dynamics
of one-dimensional strongly correlated quantum systems of continuous
variables, making it possible to study in these systems the quantum
work statistics, Loschmidt echos \citep{delcampo16,LvZhang20}, orthogonality
catastrophe \citep{Fogarty20,Mackel22}, fidelity susceptibility \citep{Gu10},
quantum speed limits \citep{XuLi20,delcampo21} and other bounds on
quantum evolution \citep{GongHamazaki22}. We hope our results can
inspire further experimental studies of non-equilibrium dynamics with
strongly correlated quantum matter and be used as a benchmark for
numerical methods, quantum simulators, and quantum computers.


\section{Acknowledgement}

We are grateful to Yun-Feng Xiong and Hongzhe Zhou for helpful discussions
on Monte Carlo simulations and L\'eonce Dupays for comments on this
work.

\vspace{1mm}
\onecolumngrid

\appendix

\section{\label{sec:Hermicitiy}Details on the consistency conditions}

\subsection{The three-body potential $v_{3}^{(ijk)}$}

We introduce the prepotential~\citep{Polychronakos92} 
\begin{equation}
w(x,\,t)=f^{\prime}(x,\,t)/f(x,\,t)=\Gamma^{\prime}(x,\,t)+\text{i}\theta^{\prime}(x,\,t).
\end{equation}
Note that $w$, $\Gamma^{\prime}$ and $\theta^{\prime}$ are all
odd functions of $x$. However, unlike the case of the TIJA wave function,
the ansatz is a complex-valued function. The condition for the complex
three-body potential term in Eq.~(\ref{eq:scrH-def}) to reduce to a
two-body complex potential is the same as in the case of the TIJA
wave function~\citep{Calogero75,Sutherland04,yang2022one}, i.e.,
$w$ must satisfy 
\begin{equation}
w(x,\,t)w(y,\,t)+w(y,\,t)w(z,\,t)+w(z,t)w(x,t)=h(x,t)+h(y,t)+h(z,t),\label{eq:w-property}
\end{equation}
where $h(x,\,t)$ is some even function in $x$. This leads to the
following differential equation~\citep{Calogero75} 
\begin{equation}
c_{1}(t)w^{\prime\prime\prime}(x,t)-12c_{2}(t)w^{\prime}(x,t)+6[w(x,t)]^{2}=c_{0}(t),
\end{equation}
where $c_{1}(t)$,$c_{2}(t)$, and $c(t)$ are functions of $t$ exclusively.
Upon making the change of variables $w^{\prime}(x,t)=-c_{1}(t)u(x,t)+c_{2}(t)$,
we find 
\begin{equation}
u^{\prime\prime}(x,t)=6[u(x,t)]^{2}-[6c_{2}^{2}(t)+c_{0}(t)]/c_{1}^{2}(t).
\end{equation}
In the case where $w(x,\,t)=\int^{x}u(y,t)dy=\Gamma^{\prime}(x,\,t)$
is real, i.e., $\theta^{\prime}(x,\,t)=0$, its general solution can
be expressed in terms of the Weierstrass zeta function. When 
\begin{equation}
\Gamma^{\prime}(x,\,t)=\frac{\lambda(t)}{x},\quad\lambda(t)a(t)\coth[a(t)x],\quad c(t)\text{sgn}(x),
\end{equation}
which corresponds to 
\begin{equation}
e^{\Gamma(x,\,t)}\propto|x|^{\lambda(t)},\quad|\sinh[a(t)x]|^{\lambda(t)},\quad\exp[c(t)|x|].
\end{equation}
the three-body potential is a constant. Note that replacing $x\to b(t)x$
in $e^{\Gamma(x,\,t)}$ will not changes the form of the $\Gamma^{\prime}(x,\,t)$.

One can show that for $e^{\Gamma(x,\,t)}\propto|x|^{\lambda(t)}$
\begin{equation}
W_{3}(t)=0.
\end{equation}
For $e^{\Gamma(x,\,t)}\propto|\sinh[c(t)x]|^{\lambda(t)}$, 
\begin{equation}
W_{3}(t)=\frac{N(N-1)(N-2)\lambda^{2}(t)c^{2}(t)}{6}.
\end{equation}
In addition, for $e^{\Gamma(x,\,t)}\propto\exp[c(t)|x|]$ 
\begin{equation}
W_{3}(t)=\frac{N(N-1)(N-2)c^{2}(t)}{6}.
\end{equation}
In general, it may be complicated to find complex-valued functions
$w$ that satisfy Eq.~(\ref{eq:w-property}) in terms of elementary
functions. However, we shall discuss a simple but non-trivial case
where the real part of $w(x,\,t)$ is proportional to its imaginary
part. This results in 
\begin{equation}
\theta^{\prime}(x,\,t)=\eta(t)\Gamma^{\prime}(x,\,t).
\end{equation}
Upon integration of both sides, we obtain 
\begin{align}
\theta(x,\,t) & =\eta(t)\Gamma(x,\,t),\label{eq:theta-c}\\
w(x,\,t) & =\Gamma^{\prime}(x,\,t)[1+\text{i}\eta(t)].\label{eq:wc}
\end{align}
When $e^{\Gamma(x,\,t)}$ takes the functional form of the ground
state wave functions of the CS, Hyperbolic, and LL models, the three-body
potential is then given by Eq.~(\ref{eq:U3-special}) in the main
text. As discussed there, we have effectively absorbed a position-independent
phase in $\theta(x,\,t)$ into the definition of $\phi(x,\,t)$.

\subsection{Two-body potential $v_{2}^{(ij)}$}

It is straightforward to compute 
\begin{align}
f_{ij}^{\prime} & =\left(\Gamma_{ij}^{\prime}+\text{i}\theta_{ij}^{\prime}\right)f_{ij}=(1+\text{i}\eta)\Gamma_{ij}^{\prime}f_{ij},\\
f_{ij}^{\prime\prime} & =(1+\text{i}\eta)\Gamma_{ij}^{\prime\prime}f_{ij}+(1+\text{i}\eta)^{2}\Gamma_{ij}^{\prime2}f_{ij},\\
\dot{f}_{ij} & =(\dot{\Gamma}_{ij}+\text{i}\dot{\theta}_{ij})f_{ij},\\
g_{i}^{\prime} & =(\Upsilon_{i}^{\prime}+\text{i}\phi_{i}^{\prime})g_{i}.
\end{align}
Thus, we find 
\begin{align}
\text{Re}v_{2}^{(ij)} & =\Gamma_{ij}^{\prime\prime}+(1-\eta^{2})\Gamma_{ij}^{\prime2}+\Gamma_{ij}^{\prime}[(\Upsilon_{i}^{\prime}-\Upsilon_{j}^{\prime})-\eta(\phi_{i}^{\prime}-\phi_{j}^{\prime})]-\frac{m}{\hbar}\dot{\theta}_{ij},\label{eq:ReU2}\\
\text{Im}v_{2}^{(ij)} & =\eta\Gamma_{ij}^{\prime\prime}+2\eta\Gamma_{ij}^{\prime2}+\Gamma_{ij}^{\prime}[\eta(\Upsilon_{i}^{\prime}-\Upsilon_{j}^{\prime})+(\phi_{i}^{\prime}-\phi_{j}^{\prime})]+\frac{m}{\hbar}\dot{\Gamma}_{ij}.\label{eq:ImU2}
\end{align}
Note that according to Eq.~(\ref{eq:theta-c}), it follows that 
\begin{equation}
\dot{\theta}_{ij}=\dot{\eta}\Gamma_{ij}+\eta\dot{\Gamma}_{ij}.
\end{equation}
As we have seen in the previous subsection, the three-body potential
term is at most a complex-valued constant. The two-body terms, if
depending on $x_{ij}$, in any case, cannot be canceled by the one-body
potential. Then the Hermiticity of the Hamiltonian requires that $\text{Im}U_{2}^{(ij)}$
is a function of time only and independent of $x_{ij}$. Since $\Gamma_{ij}^{\prime}(x)$
is an odd function, we note that $\Gamma_{ij}^{\prime\prime}$, $\Gamma_{ij}^{\prime2}$
and $\dot{\Gamma}_{ij}$ are even. This implies that 
\begin{equation}
\eta(\Upsilon_{i}^{\prime}-\Upsilon_{j}^{\prime})+(\phi_{i}^{\prime}-\phi_{j}^{\prime})=\frac{1}{\Gamma_{ij}^{\prime}}\left(-\eta\Gamma_{ij}^{\prime\prime}-2\eta\Gamma_{ij}^{\prime2}-\frac{m}{\hbar}\dot{\Gamma}_{ij}+\dot{\widetilde{\mathcal{N}}}_{2}\right)\label{eq:2-body-im}
\end{equation}
must be an even function of $x_{ij}$ only and independent of $\bar{x}_{ij}=(x_{i}+x_{j})/2$
and that $\dot{\mathcal{N}}_{2}(t)$ is a function of time only. 
\begin{lem}
For a differentiable function $F(x)$, the only possibility for $F(x)-F(y)$
to be only dependent on $x-y$ is that $F(x)$ is a linear function
of $x$. 
\end{lem}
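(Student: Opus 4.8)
The plan is to exploit the fact that the hypothesis constrains $F$ through its separate dependence on the two variables. Writing the assumption as $F(x)-F(y)=G(x-y)$ for some function $G$ of the single combination $u=x-y$, I would first observe that $G$ automatically inherits the differentiability of $F$: setting $y=0$ gives $G(u)=F(u)-F(0)$, so $G$ is differentiable wherever $F$ is. This disposes of the only regularity concern before the main argument begins.

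Next I would differentiate the identity $F(x)-F(y)=G(x-y)$ in each slot. Differentiating with respect to $x$ at fixed $y$ yields $F'(x)=G'(x-y)$, while differentiating with respect to $y$ at fixed $x$ yields $-F'(y)=-G'(x-y)$, i.e.\ $F'(y)=G'(x-y)$. Equating the two expressions for $G'(x-y)$ gives
\begin{equation}
F'(x)=G'(x-y)=F'(y)\qquad\text{for all }x,\,y.
\end{equation}
Since $x$ and $y$ range independently, $F'$ must be a constant $c$, and integrating gives $F(x)=cx+d$ with $d=F(0)$, the claimed linear (affine) form. The converse is immediate, since any such $F$ obeys $F(x)-F(y)=c(x-y)$, which depends on $x-y$ alone.

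I anticipate no serious obstacle here; the statement is elementary and the differentiation step is routine once the differentiability of $G$ has been secured. The only genuine subtlety worth flagging is that some regularity assumption is indispensable: without it one could appeal to the fact that the condition reduces, upon subtracting $F(0)$, to the additive Cauchy equation $H(x-y)=H(x)-H(y)$ for $H\equiv F-F(0)$, whose non-linear pathological solutions are excluded precisely by the differentiability hypothesis made in the statement. For the application in the main text, where $F$ stands for a smooth single-particle phase, this hypothesis is manifestly satisfied.
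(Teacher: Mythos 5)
Your proof is correct and takes essentially the same approach as the paper's: both differentiate the identity $F(x)-F(y)=G(x-y)$ to conclude that $F'$ is constant (the paper does this via a first-order Taylor expansion yielding $F'(x)\epsilon=G'(0)\epsilon$, you by equating the two partial derivatives $F'(x)=G'(x-y)=F'(y)$). Your version is slightly more careful in first securing the differentiability of $G$ and in noting the Cauchy-equation caveat, but the core argument is identical.
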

\begin{proof}
The proof is straightforward: We denote $F(x)-F(y)=G(x-y)$. Performing
a Taylor expansion, we find 
\begin{equation}
F^{\prime}(x)\epsilon=G^{\prime}(0)\epsilon,
\end{equation}
which dictates the $F^{\prime}(x)$ must be a constant. 
\end{proof}
According to the above Lemma, we conclude that 
\begin{equation}
\eta(t)\Lambda_{k}^{\prime}(t)+\phi_{k}^{\prime}(t)=-\left[\frac{m}{\hbar}\mathscr{C}_{2}(t)x_{k}+\frac{m^{2}}{\hbar^{2}}\mathscr{D}_{2}(t)\right],\label{eq:linear-func}
\end{equation}
where $\mathscr{C}_{2}(t)$ and $\mathscr{D}_{2}(t)$ are real-valued
functions of time. Integrating both sides of Eq.~(\ref{eq:linear-func})
over $x$, yields

\begin{equation}
\phi_{k}(t)=-\left[\eta(t)\Lambda_{k}(t)+\frac{m}{2\hbar}\mathscr{C}_{2}(t)x_{k}^{2}+\frac{m^{2}}{\hbar^{2}}\mathscr{D}_{2}(t)x_{k}\right]+\tau(t),\label{eq:1-body-PA}
\end{equation}
where $\tau(t)$ is the overall phase angle.

Substituting Eq.~(\ref{eq:linear-func}) into Eq.~(\ref{eq:2-body-im})
yields 
\begin{equation}
\eta(t)\Gamma_{ij}^{\prime\prime}+2\eta(t)\Gamma_{ij}^{\prime2}+\frac{m}{\hbar}\dot{\Gamma}_{ij}-\frac{m}{\hbar}\mathscr{C}_{2}(t)\Gamma_{ij}^{\prime}x_{ij}=\dot{\widetilde{\mathcal{N}}}_{2}(t).\label{eq:Im-2-body}
\end{equation}
With Eq.~(\ref{eq:linear-func}), Eq.~(\ref{eq:ReU2}) becomes 
\begin{eqnarray}
\text{Re}v_{2}^{(ij)} & =& \Gamma_{ij}^{\prime\prime}+[1-\eta^{2}(t)]\Gamma_{ij}^{\prime2}+[1+\eta^{2}(t)]\Gamma_{ij}^{\prime}(\Lambda_{i}^{\prime}-\Lambda_{j}^{\prime})\nonumber \\
 & & +\frac{m}{\hbar}\eta(t)\mathscr{C}_{2}(t)\Gamma_{ij}^{\prime}x_{ij}-\frac{m}{\hbar}\frac{d}{dt}[\eta(t)\Gamma_{ij}].\label{eq:ReU2-compact}
\end{eqnarray}

\subsection{One-body potential}

It is straightforward to compute 
\begin{align}
g_{i}^{\prime\prime} & =(\Lambda_{i}^{\prime\prime}+\text{i}\phi_{i}^{\prime\prime})g_{i}+(\Lambda_{i}+\text{i}\phi_{i}^{\prime})^{2}g_{i},\\
\dot{g}_{i} & =(\dot{\Lambda}_{i}+\text{i}\dot{\phi}_{i})g_{i}.
\end{align}
Then, according to $v_{1}^{(i)}=g_{i}^{\prime\prime}/g_{i}+2\text{i}m\dot{g}_{i}/(\hbar g_{i})$,
we find 
\begin{align}
\text{Re}v_{1}^{(i)} & =\Lambda_{i}^{\prime\prime}+\Lambda_{i}^{\prime2}-\phi_{i}^{\prime2}-\frac{2m}{\hbar}\dot{\phi}_{i},\label{eq:ReU1}\\
\text{Im}v_{1}^{(i)} & =\phi_{i}^{\prime\prime}+2\Lambda_{i}^{\prime}\phi_{i}^{\prime}+\frac{2m}{\hbar}\dot{\Lambda}_{i}.\label{eq:ImU1}
\end{align}
The Hermiticity of the one-body potential dictates that $\text{Im}U_{1}^{(i)}$
is only a function of time and independent of $x_{i}$, i.e, 
\begin{equation}
\phi_{i}^{\prime\prime}+2\Lambda_{i}^{\prime}\phi_{i}^{\prime}+\frac{2m}{\hbar}\dot{\Lambda}_{i}=\dot{\widetilde{\mathcal{N}}}_{1}.\label{eq:ImU1-diff}
\end{equation}
Substituting Eq.~(\ref{eq:linear-func}) into Eq.~(\ref{eq:ImU1-diff})
yields, 
\begin{align}
-\eta(t)\Lambda_{i}^{\prime\prime}-2\eta(t)\Lambda_{i}^{\prime2}-\frac{2m}{\hbar}\mathscr{C}_{2}(t)\Lambda_{i}^{\prime}x_{i}-\frac{2m^{2}}{\hbar^{2}}\mathscr{D}_{2}(t)\Lambda_{i}^{\prime}+\frac{2m}{\hbar}\dot{\Lambda}_{i} & =\dot{\widetilde{\mathcal{N}}}_{1}(t)+\frac{m}{\hbar}\mathscr{C}_{2}(t).\label{eq:Im-1-body}
\end{align}
Similarly, substitution of Eq.~(\ref{eq:linear-func}) into Eq.~(\ref{eq:ReU1})
yields 
\begin{eqnarray}
\text{Re}v_{1}^{(i)} & =&\Lambda_{i}^{\prime\prime}+\Lambda_{i}^{\prime2}[1-\eta^{2}(t)]-\frac{m^{2}}{\hbar^{2}}\mathscr{C}_{2}^{2}(t)x_{i}^{2}-\frac{m^{4}}{\hbar^{4}}\mathscr{D}_{2}^{2}(t)\nonumber \\
 && -\frac{2m^{3}}{\hbar^{3}}\mathscr{C}_{2}(t)\mathscr{D}_{2}(t)x_{i}-2\eta\Lambda_{i}^{\prime}\left[\frac{m}{\hbar}\mathscr{C}_{2}(t)x_{i}+\frac{m^{2}}{\hbar^{2}}\mathscr{D}_{2}(t)\right]\nonumber \\
 & &+\frac{2m}{\hbar}\left[\frac{d}{dt}[\eta(t)\Lambda_{i}]+\frac{m}{2\hbar}\dot{\mathscr{C}_{2}}(t)x_{i}^{2}+\frac{m^{2}}{\hbar^{2}}\dot{\mathscr{D}}_{2}(t)x_{i}-\dot{\tau}(t)\right].\label{eq:ReU1-compact}
\end{eqnarray}

\subsection{The general consistency conditions}

To summarize, Eqs.~(\ref{eq:Im-2-body},~\ref{eq:Im-1-body}) are
the generic consistency conditions for the parent Hamiltonian of time-dependent
Jastrow ansatz when the trial wave function $e^{\Gamma_{ij}(t)}$
is restricted to the CS, hyperbolic, and LL models. Eqs.~(\ref{eq:Im-2-body},~\ref{eq:Im-1-body})
can be further recast into

\begin{equation}
\eta(t)\Gamma_{ij}^{\prime\prime}+2\eta(t)\Gamma_{ij}^{\prime2}+\frac{m}{\hbar}\dot{\Gamma}_{ij}-\frac{m}{\hbar}\mathscr{C}_{2}(t)\Gamma_{ij}^{\prime}x_{ij}=\dot{\widetilde{\mathcal{N}}}_{2}(t),\label{eq:2-body-consist}
\end{equation}
\begin{equation}
\eta(t)\Lambda_{i}^{\prime\prime}+2\eta(t)\Lambda_{i}^{\prime2}+\frac{2m\mathscr{C}_{2}(t)}{\hbar}\left(\Lambda_{i}^{\prime}x_{i}+\frac{1}{2}\right)+\frac{2m^{2}\mathscr{D}_{2}(t)}{\hbar^{2}}\Lambda_{i}^{\prime}-\frac{2m}{\hbar}\dot{\Lambda}_{i}=-\dot{\widetilde{\mathcal{N}}}_{1}(t).\label{eq:1-body-consist}
\end{equation}

\section{\label{sec:Consistency-condition-Harmonic}Consistency conditions
for harmonic trap}

For systems confined in a harmonic trap, we would like Eq.~(\ref{eq:ReU1-compact})
to be at most quadratic in $x_{k}$. Thus, we take 
\begin{equation}
\Lambda_{k}=-\frac{m\omega(t)x_{k}^{2}}{2\hbar}.
\end{equation}
In this case, the one-body consistency condition Eq.~(\ref{eq:1-body-consist})
simplifies dramatically, giving $\mathscr{D}_{2}(t)=0$ and 
\begin{align}
\mathscr{C}_{2}(t) & =\eta\omega(t)+\frac{\dot{\omega}(t)}{2\omega(t)},\label{eq:1-body-consist-harmonic}\\
\dot{\mathcal{N}}_{1}(t) & =-\frac{m}{\hbar}[\mathscr{C}_{2}(t)-\eta(t)\omega(t)]=-\frac{m\dot{\omega}(t)}{2\hbar\omega(t)}.
\end{align}
The one-body phase angle~(\ref{eq:1-body-PA}) now becomes 
\begin{equation}
\phi_{k}(t)=-\frac{m}{2\hbar}\left[\mathscr{C}_{2}(t)-\eta(t)\omega(t)\right]x_{k}^{2}+\tau(t)=-\frac{m\dot{\omega}(t)}{4\hbar\omega(t)}x_{k}^{2}+\tau(t).\label{eq:phi-1PA-harmonic}
\end{equation}
Therefore, substituting Eq.~(\ref{eq:phi-1PA-harmonic}) into Eq.~(\ref{eq:ReU1}),
the real part of the one-body potential becomes 
\begin{equation}
\text{Re}v_{1}^{(i)}=\frac{m^{2}\Omega^{2}(t)}{\hbar^{2}}x_{i}^{2}-\frac{m}{\hbar}\vartheta(t),
\end{equation}
where $\Omega(t)$ and $\vartheta(t)$ are defined in Eq.~(\ref{eq:omg-def})
and Eq.~(\ref{eq:vartheta-def}) in the main text, respectively.
One can perform a sanity check by substituting Eq.~(\ref{eq:1-body-consist-harmonic})
into Eq.~(\ref{eq:ReU1-compact}) and find that the real part of
the one-body potential is indeed independent of $\eta(t)$ and Eq.~(\ref{eq:omg-def})
is reproduced.

Eq.~(\ref{eq:ReU2-compact}) becomes 
\begin{eqnarray}
\text{Re}v_{2}^{(ij)} & =&\Gamma_{ij}^{\prime\prime}+[1-\eta^{2}(t)]\Gamma_{ij}^{\prime2}-\frac{m\omega(t)}{\hbar}[1+\eta^{2}(t)]\Gamma_{ij}^{\prime}x_{ij}\nonumber \\
 & &+\frac{m}{\hbar}\eta(t)\left[\eta\omega(t)+\frac{\dot{\omega}(t)}{2\omega(t)}\right]\Gamma_{ij}^{\prime}x_{ij}-\frac{m}{\hbar}\frac{d}{dt}[\eta(t)\Gamma_{ij}]\nonumber \\
 & =&\Gamma_{ij}^{\prime\prime}+[1-\eta^{2}(t)]\Gamma_{ij}^{\prime2}+\frac{m\omega(t)}{\hbar}\left[\frac{\eta(t)\dot{\omega}(t)}{2\omega^{2}(t)}-1\right]\Gamma_{ij}^{\prime}x_{ij}-\frac{m}{\hbar}\frac{d}{dt}[\eta(t)\Gamma_{ij}]\nonumber \\
 & =&\Gamma_{ij}^{\prime\prime}+[1-\eta^{2}(t)]\Gamma_{ij}^{\prime2}-\frac{m\varpi(t)}{\hbar}\Gamma_{ij}^{\prime}x_{ij}-\frac{m}{\hbar}\frac{d}{dt}[\eta(t)\Gamma_{ij}],
\end{eqnarray}
where $\varpi(t)$ is defined in Eq.~(\ref{eq:omgbar-def}) in the
main text.

\section{\label{sec:PHJ-rotating}The parent Hamiltonian of the real-valued
TDJA}

One can rewrite $\mathscr{U}_{P}(t)$ as 
\begin{align}
\mathscr{U}_{P}(\bm{x},\,t) & =\exp\left[\text{i}\left(\eta(t)\sum_{i<j}\Gamma_{ij}-\sum_{k}\frac{m\dot{\omega}(t)}{4\hbar\omega(t)}x_{k}^{2}+N\tau(t)\right)\right]\label{eq:scrU-rep1}\\
 & =\exp\left[\text{i}\left(\eta(t)\sum_{l\neq i}\Gamma_{il}-\frac{m\dot{\omega}(t)}{4\hbar\omega(t)}x_{i}^{2}\right)+\text{i}\left(\eta(t)\sum_{k<l,\,k,l\neq i}\Gamma_{kl}-\frac{m\dot{\omega}(t)}{4\hbar\omega(t)}\sum_{l\neq i}x_{l}^{2}+\tau(t)\right)\right].\label{eq:scrU-rep2}
\end{align}
With Eqs.~(\ref{eq:scrU-rep1},~\ref{eq:scrU-rep2}), one finds 
\begin{equation}
\dot{\mathscr{U}_{P}}(\bm{x},t)=\text{i}\mathscr{U}_{P}(\bm{x},t)\left(\sum_{i<j}\frac{d}{dt}[\eta(t)\Gamma_{ij}]-\frac{1}{2\hbar}\frac{d}{dt}\left[\frac{m\dot{\omega}(t)}{2\omega(t)}\right]\sum_{k}x_{k}^{2}+N\dot{\tau}(t)\right),\label{eq:scrU-dot}
\end{equation}
\begin{align}
\mathscr{U}_{P}^{\dagger}(\bm{x},t)p_{i}\mathscr{U}_{P}(\bm{x},t) & =p_{i}-\text{i}[\eta(t)\sum_{l,\,l\neq k}\Gamma_{il}-\frac{m\dot{\omega}(t)}{4\hbar\omega(t)}x_{i}^{2},\,p_{i}]\nonumber \\
 & =p_{i}+\hbar\eta(t)\sum_{l\neq i}\Gamma_{il}^{\prime}-\frac{m\dot{\omega}(t)}{2\omega(t)}x_{i}.\label{eq:p-prime}
\end{align}
Substituting Eqs.~(\ref{eq:scrU-dot},~\ref{eq:p-prime}) into Eq.~(\ref{eq:U-connection})
in the main text, this leads to 
\begin{eqnarray}
\mathscr{H}^{\prime}(t) & =&\frac{1}{2m}\sum_{i}p_{i}^{2}+\frac{\hbar^{2}}{2m}\eta^{2}(t)\sum_{i}\left[\sum_{j\neq i}\Gamma_{ij}^{\prime}\right]\left[\sum_{k\neq i}\Gamma_{ik}^{\prime}\right]\nonumber \\
 & &+\frac{1}{2m}\sum_{i}\bigg\{\hbar\eta(t)\sum_{j\neq i}\Gamma_{ij}^{\prime}-\frac{m\dot{\omega}(t)}{2\omega(t)}x_{i},\,p_{i}\bigg\}-\frac{\hbar\dot{\omega}(t)\eta(t)}{2\omega(t)}\sum_{i}\sum_{j\neq i}\Gamma_{ij}^{\prime}x_{i}\nonumber \\
 & &+\frac{1}{2}m\omega^{2}(t)\sum_{k}x_{k}^{2}+\frac{\hbar^{2}}{m}\sum_{i<j}\left(\Gamma_{ij}^{\prime\prime}+[1-\eta^{2}(t)]\Gamma_{ij}^{\prime2}\right)\nonumber \\
 & &-\hbar\varpi(t)\sum_{i<j}\Gamma_{ij}^{\prime}x_{ij}+\tilde{\mathscr{E}}^{\prime}(t),\label{eq:H-prime-exp}
\end{eqnarray}
where 
\begin{align}
\tilde{\mathscr{E}}^{\prime}(t) & \equiv\mathscr{E}(t)+N\hbar\dot{\tau}(t)=-\frac{1}{2}N\hbar\omega(t)+\frac{\hbar^{2}}{m}[1-\eta^{2}(t)]W_{3}(t).
\end{align}
Remarkably, upon transforming $\mathscr{H}(t)\to\mathscr{H}^{\prime}(t)$,
the frequency of the trap transforms as $\Omega(t)\to\omega(t)$~\citep{delcampo2013shortcuts}.
Furthermore, the term $\sum_{i<j}\frac{d}{dt}[\eta(t)\Gamma_{ij}]$
in $\mathscr{H}(t)$ {[}see Eq.~(\ref{eq:TDPHJ}){]} cancels. Finally,
we note that 
\begin{equation}
\sum_{i}\left[\sum_{j\neq i}\Gamma_{ij}^{\prime}\right]\left[\sum_{k\neq i}\Gamma_{ik}^{\prime}\right]=\sum_{i\neq j\ne k\neq i}\Gamma_{ij}^{\prime}\Gamma_{ik}^{\prime}+\sum_{i\neq j}\Gamma_{ij}^{\prime2}.
\end{equation}
Furthermore, we find 
\begin{equation}
\sum_{\substack{i\neq j\ne k\neq i}
}\Gamma_{ij}^{\prime}\Gamma_{ik}^{\prime}=-\sum_{\substack{i\neq j\ne k\neq i}
}\Gamma_{ij}^{\prime}\Gamma_{jk}^{\prime},
\end{equation}
where we have changed the dummy indices $i$ and $j$. Upon changing
the dummy indices $j$ and $k$, we find 
\begin{equation}
\sum_{\substack{i\neq j\ne k\neq i}
}\Gamma_{ij}^{\prime}\Gamma_{jk}^{\prime}=\sum_{\substack{i\neq j\ne k\neq i}
}\Gamma_{ik}^{\prime}\Gamma_{kj}^{\prime}=\sum_{\substack{i\neq j\ne k\neq i}
}\Gamma_{ki}^{\prime}\Gamma_{jk}^{\prime}.
\end{equation}
Therefore, it follows that 
\begin{align}
\sum_{\substack{i\neq j\ne k\neq i}
}\Gamma_{ij}^{\prime}\Gamma_{ik}^{\prime} & =-\frac{1}{3}\sum_{\substack{i\neq j\ne k\neq i}
}\left(\Gamma_{ij}^{\prime}\Gamma_{jk}^{\prime}+\Gamma_{ij}^{\prime}\Gamma_{ki}^{\prime}+\Gamma_{ki}^{\prime}\Gamma_{jk}^{\prime}\right)\nonumber \\
 & =-2\sum_{\substack{i<j<k}
}\left(\Gamma_{ij}^{\prime}\Gamma_{jk}^{\prime}+\Gamma_{ij}^{\prime}\Gamma_{ki}^{\prime}+\Gamma_{ki}^{\prime}\Gamma_{jk}^{\prime}\right).
\end{align}
According to Eq.~(\ref{eq:v3C-def}), 
\begin{align}
\sum_{i}\left[\sum_{j\neq i}\Gamma_{ij}^{\prime}\right]\left[\sum_{k\neq i}\Gamma_{ik}^{\prime}\right] & =-2\sum_{\substack{i<j<k}
}\left(\Gamma_{ij}^{\prime}\Gamma_{jk}^{\prime}+\Gamma_{ij}^{\prime}\Gamma_{ki}^{\prime}+\Gamma_{ki}^{\prime}\Gamma_{jk}^{\prime}\right)+2\sum_{i<j}\Gamma_{ij}^{\prime2}\nonumber \\
 & =2W_{3}(t)+2\sum_{i<j}\Gamma_{ij}^{\prime2}.\label{eq:3-body-term}
\end{align}
Furthermore, we note that since $\Gamma_{ij}^{\prime}$ is skew-symmetric
in the indices $i$ and $j$, 
\begin{equation}
\sum_{i}\sum_{j\neq i}\Gamma_{ij}^{\prime}(t)x_{i}\equiv\frac{1}{2}\sum_{i}\sum_{j\neq i}\Gamma_{ij}^{\prime}(t)x_{ij}=\sum_{i<j}\Gamma_{ij}^{\prime}(t)x_{ij}.\label{eq:2-body}
\end{equation}
Similarly, 
\begin{align}
\sum_{i}\bigg\{\sum_{j\neq i}\Gamma_{ij}^{\prime}(t),\,p_{i}\bigg\} & =\sum_{i\neq j}\Gamma_{ij}^{\prime}(t)p_{i}+\sum_{i\neq j}p_{i}\Gamma_{ij}^{\prime}(t)\nonumber \\
 & =\frac{1}{2}\sum_{i\neq j}\Gamma_{ij}^{\prime}(t)(p_{i}-p_{j})+\frac{1}{2}\sum_{i\neq j}(p_{i}-p_{j})\Gamma_{ij}^{\prime}(t)\nonumber \\
 & =\frac{1}{2}\sum_{i\neq j}\{\Gamma_{ij}^{\prime}(t),\,p_{ij}\}\nonumber \\
 & =\sum_{i<j}\{\Gamma_{ij}^{\prime}(t),\,p_{ij}\}.\label{eq:Gamma-P-term}
\end{align}
Substituting Eqs.~(\ref{eq:3-body-term},~\ref{eq:2-body},~\ref{eq:Gamma-P-term})
into Eq.~(\ref{eq:H-prime-exp}) yields 
\begin{eqnarray}
\mathscr{H}^{\prime}(t) & =&\frac{1}{2m}\sum_{i}p_{i}^{2}+\frac{1}{2}m\omega^{2}(t)\sum_{k}x_{k}^{2}+\frac{\hbar^{2}}{m}\sum_{i<j}\left(\Gamma_{ij}^{\prime\prime}+\Gamma_{ij}^{\prime2}\right)\nonumber \\
 & &+\frac{\hbar\eta(t)}{2m}\sum_{i<j}\{\Gamma_{ij}^{\prime}(t),\,p_{ij}\}-\frac{\dot{\omega}(t)}{4\omega(t)}\sum_{i}\{x_{i},\,p_{i}\}-\hbar\left[\varpi(t)+\frac{\dot{\omega}(t)\eta(t)}{2\omega(t)}\right]\sum_{i<j}\Gamma_{ij}^{\prime}x_{ij}+\mathscr{E}_{0}^{\prime}(t).
\end{eqnarray}


\twocolumngrid

\bibliographystyle{apsrev4-1}
\bibliography{BAIQS_lib}

\end{document}